\documentclass[journal]{IEEEtran}
\usepackage{amsmath,amsfonts,amssymb}
\usepackage{algorithmic}
\usepackage{array}
\usepackage[caption=false,font=normalsize,labelfont=sf,textfont=sf]{subfig}
\usepackage{textcomp}
\usepackage{stfloats}
\usepackage{url}
\usepackage{verbatim}
\usepackage{graphicx}
\usepackage{orcidlink}
\usepackage{diagbox}
\def\BibTeX{{\rm B\kern-.05em{\sc i\kern-.025em b}\kern-.08em
    T\kern-.1667em\lower.7ex\hbox{E}\kern-.125emX}}

\usepackage{subfig}
\usetikzlibrary{positioning, arrows.meta, shapes.geometric, calc}
\usetikzlibrary{patterns}

\makeatletter
\def\@begintheorem#1#2{\trivlist\item[\hskip\labelsep{\bfseries #1\ #2.}]\itshape}
\def\@opargbegintheorem#1#2#3{\trivlist\item[\hskip\labelsep{\bfseries #1\ #2\ (#3).}]\itshape}
\makeatother
\newtheorem{theorem}{Theorem}
\newtheorem{lemma}{Lemma}

\definecolor{cRS}{RGB}{60,120,220}      
\definecolor{cFB}{RGB}{230,140,40}      
\definecolor{cDATA}{RGB}{50,170,120}    
\definecolor{cINT}{RGB}{220,70,70}      
\definecolor{cBOX}{RGB}{245,248,255}

\tikzset{
  pics/basestation/.style={
    code={
      \fill[black!10] (-0.25,0) -- (0,1.3) -- (0.25,0) -- cycle;
      \draw[black, line width=0.9pt] (-0.25,0) -- (0,1.3) -- (0.25,0);
      \draw[black, line width=0.9pt] (-0.18,0.35) -- (0.18,0.35);
      \draw[black, line width=0.9pt] (-0.13,0.70) -- (0.13,0.70);
      \draw[black, line width=0.9pt] (-0.08,1.05) -- (0.08,1.05);
      \draw[black, line width=0.9pt] (0,1.3) -- (0,1.55);
      \draw[black, line width=0.8pt] (0,1.55) ++(0.10,0.05)
        arc[start angle=20,end angle=160,radius=0.18];
      \draw[black, line width=0.8pt] (0,1.55) ++(0.14,0.10)
        arc[start angle=20,end angle=160,radius=0.28];
      \draw[black, line width=0.8pt] (0,1.55) ++(0.18,0.15)
        arc[start angle=20,end angle=160,radius=0.38];
    }
  },
  pics/phone/.style={
    code={
      \draw[black, line width=0.9pt, rounded corners=5pt, fill=black!6]
        (-0.55,-1.05) rectangle (0.55,1.05);
      \draw[black!60, line width=0.6pt, rounded corners=3pt, fill=white]
        (-0.45,-0.70) rectangle (0.45,0.70);
      \draw[black!70, line width=0.6pt] (-0.12,0.86) -- (0.12,0.86);
      \fill[black!55] (0,-0.86) circle (0.05);
    }
  }
}

\tikzset{
  arrRS/.style={-{Latex[length=3mm]}, line width=1.0pt, draw=cRS, dashed},
  arrFB/.style={-{Latex[length=3mm]}, line width=1.0pt, draw=cFB},
  arrDATA/.style={-{Latex[length=3mm]}, line width=1.1pt, draw=cDATA},
  arrINT/.style={-{Latex[length=3mm]}, line width=1.0pt, draw=cINT, dashed},
  callout/.style={rounded corners=6pt, draw=black!50, fill=cBOX, line width=0.7pt,
                  inner sep=6pt, align=left},
  label/.style={font=\small}
}

\begin{document}
\bstctlcite{BSTcontrol}
\title{Rate Loss Analysis for Multiple-Antenna NOMA with Limited Feedback}
\author{Ruizhan Shen\orcidlink{0000-0001-6591-6333} and Hamid Jafarkhani\orcidlink{0000-0001-6838-8038}, \IEEEmembership{Fellow, IEEE} \thanks{The authors are with the University of California Irvine, Irvine, CA 92697 USA. This work was supported in part by the NSF Award CCF-2328075. }
\thanks{A shorter version of this work has been submitted to IEEE for possible publication.
Copyright may be transferred without notice, after which this version may no longer be accessible.}}

\maketitle

\begin{abstract}
In the limited feedback downlink multiple-input single-output (MISO) non-orthogonal multiple access (NOMA) system, both the effective channel gain and the channel direction need to be quantized. The quantization error affects the feasible region of NOMA and the rate loss compared with the case of full channel state information (CSI). In this work, we analyze these effects and obtain an upper bound for the rate loss.
Numerical results show that the sum rate of the limited feedback MISO-NOMA system approaches that of the full CSI as the number of feedback bits increases.
\end{abstract}

\begin{IEEEkeywords}
NOMA, MISO, CSI, limited feedback, FDD, quantization, rate loss, downlink.
\end{IEEEkeywords}

\section{Introduction}
\IEEEPARstart{A}{s} a promising solution for the next-generation multiple access, non-orthogonal multiple access (NOMA) has been widely studied~\cite{10720669,10786246}. 
A key advantage of NOMA is the ability to support more devices. With the rapid growth of devices in next generation wireless networks, for example the needs of massive machine-type communications (mMTC), the traditional orthogonal multiple access (OMA) faces the challenges of deficient use of resources, as the devices may not exhaust the entire resource block~\cite{Norouzi2023Joint}. By allowing users to share the same time, frequency, and code resources, NOMA can increase the overall efficiency of wireless networks, especially user fairness and quality of service (QoS) performance~\cite{8823873}.

In this work, we focus on downlink power-domain NOMA, and more specifically, we consider beamformer-based MIMO-NOMA systems~\cite{10729214}. Similar to spatial-domain multiple access (SDMA), beamformer-based MIMO-NOMA uses precoding but in contrast, it applies successive interference cancellation (SIC) decoding.
The beamformer-based MIMO-NOMA solutions in the literature are generally based on full-CSI, which is not available in frequency division duplex (FDD) systems~\cite{10729214}. 

Although limited feedback for SDMA has been well studied and adopted by standards~\cite{3gpp_ts38214_v16.17}, research on limited feedback for MIMO-NOMA is not as mature, especially for beamformer-based systems. To use NOMA with limited feedback, Ding et al.~\cite{7434594} proposed considering 1-bit feedback that indicates whether the effective gain of each user equipment (UE) is above a preset threshold or not. Cui et al.~\cite{8502922} proposed a successive convex approximation algorithm to design beamforming and power allocation for MIMO-NOMA under two imperfect CSI scenarios, channel distribution information and channel estimation uncertainty. Yapici et al.~\cite{9094017} proposed a two-bit mmWave NOMA framework, where the two bits indicate whether the user distance and channel angle exceed predefined thresholds. Moreover, Morales-C\'espedes et al.~\cite{8935164} present a NOMA power allocation method in which only large-scale effects are available to the transmitter. Although No-CSI NOMA designs have also been developed for outage performance using reinforcement-learning techniques~\cite{9281361}, CSI-assisted designs are more suitable for instantaneous adaptation as they do not require the online learning phase and can support the optimization of broader design objectives.

The effects of limited feedback in system performance have been studied for different SISO-NOMA systems, including asynchronous and reconfigurable intelligent surface-aided systems ~\cite{7968348,9103094,10384715}. However, the limited feedback design and analysis developed for SISO-NOMA systems cannot be applied directly to MISO-NOMA systems. In SISO-NOMA, users can be ordered according to their scalar channel strength, and the limited feedback focuses mainly on quantizing channel gains.
In contrast, in multiple antenna systems, the BS must also select the beamforming vector, which directly affects the users' effective channel gain and the optimal decoding order. 
The impacts of the feedback rate on the probability of outage and the optimal number of feedback bits to maximize net throughput in cluster-based NOMA have been studied in \cite{7972957}. However, to the best of our knowledge, the effects of limited feedback on beamforming and the optimal decoding order have not been considered in the existing literature. In addition, the analysis of limited feedback beamformer-based NOMA does not exist in the literature.


In this work, we consider a two-user MISO-NOMA system, which is a common practical setting and can serve as the basic cluster structure in multi-cluster NOMA, thereby providing insights into more general multi-user scenarios. The main contributions of this work are as follows:
\begin{itemize}
    \item We present a limited feedback multiple-antenna NOMA framework which considers the interaction of beamforming, power allocation, and SIC decoding order. 
    \item We analyze the rate loss caused by limited feedback, provide an upper bound, and characterize the convergence rate under Rayleigh channel. 
    \item We analyze and present the feasible region where one should use NOMA with limited feedback.
\end{itemize}
The work is organized as follows: Section \ref{sec:model} includes the system model, problem formulation, and the quantizer structure. Section \ref{sec:region} presents the feasible regions. Section \ref{sec:loss} provides a rate loss analysis, where we obtain an upper bound for the rate loss. Simulation results are presented in Section \ref{sec:simu}. Section \ref{conclusion} concludes the work. 

\textbf{\textit{Notations:}} In this work, regular and bold letters stand for scalars and vectors, respectively. $(\cdot)^H$ represents the Hermitian transpose. In addition, $|x|$ and $\mathbb{E}[x]$ denote the absolute and expected values of x, respectively. $\|\mathbf{x}\|$ represents the Euclidean norm of vector $\mathbf{x}$ and $x \propto y$ means that $x$ is linearly proportional to $y$.

\section{System Model}\label{sec:model}
\subsection{Transmit and Feedback Process}
We consider a limited feedback beamformer-based system, as shown in Fig. \ref{fig:1}. The BS has $N_t$ antennas and transmits the following superposed signal to two single-antenna users:
\[
\mathbf{x} = \mathbf{w}_1 \sqrt{p_1} s_1 + \mathbf{w}_2 \sqrt{p_2} s_2, 
\]
where $\mathbf{w}_k \in \mathbb{C}^{N_t*1}$ is a unit-norm precoding vector chosen from a codebook $\mathcal{C}=\{ \mathbf{c}_i \in \mathbb{C}^{N_t*1}: ||\mathbf{c_i}||=1,i=1,2,\dots 2^{B'}\}$, $B'$ is the number of feedback bits that represent the precoding vector indices. The predefined codebook is shared between BS and UEs. The channel vector and power allocation of User $k$ are $\mathbf{h}_k \in \mathbb{C}^{N_t*1}$ and $p_k$, respectively. The noise power is $\sigma^2$.
\begin{figure}[ht]
\centering
\resizebox{0.92\columnwidth}{!}{%
\begin{tikzpicture}[font=\small]
  \coordinate (BS)  at (0,3.5);
  \coordinate (UE1) at (-4.9,0);
  \coordinate (UE2) at ( 4.9,0);
  \pic at (BS) {basestation};
  \node[font=\bfseries] at ($(BS)+(0,-0.55)$) {BS};
  \pic at (UE1) {phone};
  \node[font=\bfseries] at ($(UE1)+(0,-1.55)$) {UE1 (strong user)};
  \pic at (UE2) {phone};
  \node[font=\bfseries] at ($(UE2)+(0,-1.55)$) {UE2 (weak user)};
  \draw[arrRS] (BS) -- ($(UE1)+(-0.3,1.2)$)
    node[pos=0.45, above, font=\bfseries, text=cRS] {RS};
  \draw[arrRS] (BS) -- ($(UE2)+(0.3,1.2)$)
    node[pos=0.45, above, font=\bfseries, text=cRS] {RS};
  \draw[arrFB] ($(UE1)+(0.52,0.9)$) -- (BS)
    node[pos=0.25, below, yshift=7pt, font=\bfseries, text=cFB] {PMI, CQI};
  \draw[arrFB] ($(UE2)+(-0.52,0.9)$) -- (BS)
    node[pos=0.25, below, yshift=7pt, font=\bfseries, text=cFB] {PMI, CQI};
  \draw[arrDATA] (BS) -- ($(UE1)+(0.52,0.1)$)
    node[pos=0.35, below, font=\bfseries, text=cDATA] {DL Data};
  \draw[arrDATA] (BS) -- ($(UE2)+(-0.52,0.1)$)
    node[pos=0.35, below, font=\bfseries, text=cDATA] {DL Data};
  \draw[arrINT, bend left=15, Latex-Latex] ($(UE1)+(0.8,0)$)
    to ($(UE2)+(-0.8,0)$)
    node[pos=0.5, above, font=\bfseries, text=cINT] {Inter-User Interference};
  \node[callout, anchor=south west] (B1) at ($(UE1)+(-1.7,-3.2)$) {%
    \textbf{PMI feedback:}
    $\displaystyle j_1=\arg\max_{\mathbf{c}_j\in\mathcal{C}} \big|\mathbf{h}_1^{H}\mathbf{c}_j\big|^2$\\
    \textbf{CQI feedback:} $q(|\mathbf{h}_1^{H}\mathbf{c}_{j_1}|^2)$
  };
  \node[callout, anchor=south east] (B2) at ($(UE2)+(1.7,-3.2)$) {%
    \textbf{PMI feedback:} $\displaystyle j_2=\arg\max_{\mathbf{c}_j\in\mathcal{C}} |\mathbf{h}_2^{H}\mathbf{c}_j|^2$\\
    \textbf{CQI feedback:} $q(|\mathbf{h}_2^{H}\mathbf{c}_{j_2}|^2)$
  };
\end{tikzpicture}
}
\caption{Beamformer-based NOMA signaling with limited feedback.}
\label{fig:1}
\end{figure}

At UEs, precoding vector selections are based on the local channel information. 
Each user chooses the precoding vector based on maximum ratio transmission (MRT):
\[
j_k=\arg\max_{\mathbf{c}_j\in\mathcal{C}} \big|\mathbf{h}_k^{H}\mathbf{c}_j\big|^2.
\]
User $k$ sends the index $j_k$ with $B'$ bits and the quantized effective gain $q(|\mathbf{h}_k^{H}\mathbf{c}_{j_k}|^2)$, as CQI, with $B$ bits to BS. The structure of the quantizer is discussed in Section~\ref{sec:quan}. Denoting $\mathbf{w}_k=\mathbf{c}_{j_k}$ as the beamforming vector used by BS for User k, we call the user with the higher effective channel gain $|\mathbf{h}_k^H \mathbf{w}_k|^2$, the strong user. Using the CQI feedback, the BS can decide which user is the strong user and share this information with the UEs using 1 bit.

\subsection{Sum Rate Maximization}
Without loss of generality, we assume that User 1 is the strong user and $p_1 = \beta P$ and $p_2 = (1-\beta)P$, where P is the total power. User 1 first decodes User 2's message with
$
    SINR_{1\to2}=\tfrac{(1-\beta)P |\mathbf{h_1^H}\mathbf{w_2}|^2}{ \beta P |\mathbf{h_1^H}\mathbf{w_1}|^2+\sigma^2},
$
and then decodes its own message with
$
    SINR_{1}=\tfrac{\beta P |\mathbf{h_1^H}\mathbf{w_1}|^2}{\sigma^2}.
$
User 2 decodes its own message with
$
    SINR_{2}=\tfrac{(1-\beta) P |\mathbf{h_2^H}\mathbf{w_2}|^2}{ \beta P |\mathbf{h_2^H}\mathbf{w_1}|^2+\sigma^2}.
$
We consider maximizing the sum rate with QoS constraints, resulting in the following optimization problem:
\[
\begin{aligned}
\max_{\beta,\mathbf w_1,\mathbf w_2}\quad & R_1 + R_2 \\
\text{s.t.}\quad 
 & R_1 \ge R_{\mathrm{th}}, R_2 \ge R_{\mathrm{th}},\\
 & \mathbf w_1,\mathbf w_2 \in \mathcal C .
\end{aligned}
\]

Successful decoding requires that the strong user decodes its own message and the weak user's message, and the weak user decodes its own message. Let $\varepsilon=2^{R_{\mathrm{th}}}-1$, then successful decoding requires $\mathrm{SINR}_2(\beta)$, $\mathrm{SINR}_{1\to2}(\beta)$, $\mathrm{SINR}_{1}(\beta) \geq \varepsilon$, resulting in the following separate conditions, respectively:
$
\beta \ge \beta_{1,\min}=\frac{\varepsilon\,\sigma^2}{P\,|\mathbf{h}_1^H\mathbf{w}_1|^2},
\beta \le \beta_{2,\max}=\frac{P\,|\mathbf{h}_2^H\mathbf{w}_2|^2-\varepsilon\sigma^2}{P\,|\mathbf{h}_2^H\mathbf{w}_2|^2+\varepsilon\,P\,|\mathbf{h}_2^H\mathbf{w}_1|^2},
\beta \le \beta_{\mathrm{SIC},\max}=\frac{P\,|\mathbf{h}_1^H\mathbf{w}_2|^2-\varepsilon\sigma^2}{P\,|\mathbf{h}_1^H\mathbf{w}_2|^2+\varepsilon\,P\,|\mathbf{h}_1^H\mathbf{w}_1|^2}
$.
Here, we assume that the BS directly uses the PMI recommended by the UEs. In the full-CSI case, the optimal beamforming vectors are aligned with the corresponding channel vectors, and therefore we have $
\beta^*_{1,\min}=\frac{\varepsilon\,\sigma^2}{P\,\|\mathbf h_1\|^2},
\beta^*_{2,\max}=\frac{P\,\|\mathbf h_2\|^2-\varepsilon\sigma^2}{P\,\|\mathbf h_2\|^2+\varepsilon\,P\,\left|\mathbf{h}_2^H\mathbf{h}_1/\|\mathbf{h}_1\|\right|^2},
\beta^*_{\mathrm{SIC},\max}=\frac{P\,\left|\mathbf{h}_1^H\mathbf{h}_2/\|\mathbf{h}_2\|\right|^2-\varepsilon\sigma^2}{P\,\left|\mathbf{h}_1^H\mathbf{h}_2/\|\mathbf{h}_2\|\right|^2+\varepsilon\,P\,\|\mathbf h_1\|^2}
$.
The optimal full-CSI power allocation maximizes the sum rate, i.e., $\beta^*=\arg\max_{\beta\in\mathcal{B}} \mathrm{R}(\beta)$, where 
$\mathcal{B} = \{\beta| \beta\ge\max\{0, \beta^*_{1,min}\}, \beta\le \min\{\beta^*_{2,\max},\ \beta^*_{\mathrm{SIC},\max}, 1\}$. As discussed later, if $\mathcal{B}$ is empty, NOMA is not feasible.

\subsection{Power Allocation with Limited Feedback}
As shown in the full-CSI case, the BS needs to evaluate both the effective gain and the interference gain to allocate power between users. Although the effective gain can be accessed through feedback, the interference gain is not directly known. BS can use quantized gains and beamforming direction to estimate the real channel, $\hat{\mathbf{h}}_i \approx \sqrt{\hat H_{ii}}\mathbf{w}_i$, where
$\hat H_{ii}=q(|\mathbf{h}_i\mathbf{w}_i|^2)$ is the quantized effective gain. The interference gain can be estimated as
$\hat H_{ij}
    = |\hat h_i^H \mathbf{w}_j|^2
     = \hat H_{ii}\,|\mathbf{w}_i^H \mathbf{w}_j|^2.$
Then, the BS can determine the optimal power allocation, 
$\beta_q=\arg\max_{\beta\in\mathcal{B}_q} \mathrm{R}(\beta)$, where 
$\mathcal{B}_q = \{\beta| \beta\ge\max\{0, \hat\beta_{1,min}\}, \beta\le \min\{\hat\beta_{2,\max},\ \hat\beta_{\mathrm{SIC},\max}, 1\}$. The parameters are reevaluated with limited feedback, i.e., $\hat\beta_{1,\min}=\frac{\varepsilon\,\sigma^2}{P\,\hat H_{11}}$, $
\hat\beta_{2,\max}=\frac{P\,\hat H_{22}-\varepsilon\,\sigma^2}{P\,\hat H_{22}+\varepsilon\,P\,\hat{H}_{21}}$, and $
\hat\beta_{\mathrm{SIC},\max} =\frac{P\,\hat{H}_{12}-\varepsilon\,\sigma^2}{P\,\hat{H}_{12}+\varepsilon\,P\,\hat H_{11}}.$

\subsection{Amplitude and Phase Quantization}\label{sec:quan}

There are two components in limited feedback systems that cause performance degradation: (i) phase quantization, which limits the beamforming choices, and (ii) amplitude quantization, which affects SIC order and power allocation.

To describe the effect of phase quantization, let us denote $H_{i}=\|\mathbf h_i\|^2$ as the effective gain with ideal MRT, and $H_{ii}=|\mathbf h_i^{H}\mathbf w_i|^2$ as the effective channel gains after beamforming with limited feedback. Then, $\eta_{ii} = \frac{H_{ii}}{H_i}$ represents the effective gain loss due to the non-ideal beamforming direction. Similarly, $H_{ij}=|\mathbf h_i^{H}\mathbf w_j|^2, i\neq j$ represents the gain of cross interference and $\eta_{ij} = \frac{H_{ij}}{H_i}, i\neq j$ defines the ratio of interference to effective gain. For the purpose of analysis, we assume that the beamforming codebook is a random vector quantization (RVQ) codebook.  
To guarantee that the BS does not transmit at a rate more than the capacity of the weak user's channel, we borrowed the quantizer from~\cite{7968348} which quantizes the input values to their left boundries:
\[
q(x) = 
\begin{cases}
    \left\lfloor \dfrac{x}{\delta} \right\rfloor \delta, & x < (2^B-1)\delta\\
    (2^B-1)\delta, &x\ge (2^B-1)\delta
\end{cases}
\]

When $x < (2^B-1)\delta$, we have $|H_{ii}-\hat H_{ii}|<\delta$, and $\delta\propto 2^{-B}$. When $H_{ii}>(2^B-1)\delta$, the quantization error, $|H_{ii}-(2^B-1)\delta|$, is denoted by $\Delta_{sat}$.

\section{Feasible Region}\label{sec:region}

Under certain conditions, NOMA cannot guarantee (i) the minimum rate requirements or (ii) SIC decoding. In such a region, NOMA is not feasible. In the full-CSI case, the feasibility of NOMA depends on the channel vectors, the rate threshold, and the transmit SNR. For limited feedback, it also depends on the quantization errors. In this section, we present the sufficient channel conditions for the feasibility of NOMA.

\subsection{Feasible Region in Full-CSI}
In full-CSI systems, meeting the minimum rate requirements and ensuring successful SIC decoding is equivalent to the following properties of $\beta^*$:

\begin{itemize}

\item $\beta^*$ exists: $\beta_{1,\min} \le \beta_{2,\max}$ and $\beta_{1,\min} \le \beta_{\mathrm{SIC},\max}$.

\item $\beta^*$ is within $(0,1)$: 
$\beta_{\mathrm{1},\min} \le 1$, i.e., $P\|\mathbf{h}_1\|^2\ge \epsilon\sigma^2$.
\end{itemize}

In general, the feasibility of NOMA can be decided by the disparity of the channel strengths, $\rho =\frac{\|\mathbf{h}_2\|}{\|\mathbf{h}_1\|}$, and the angle difference, captured by $\cos\theta = \frac{|\mathbf{h}_1^H\mathbf{h}_2|}{\|\mathbf{h}_1\|\|\mathbf{h}_2\|}$~\cite{8907421}. Simple algebra shows that the existence of $\beta^*$ is equivalent to: $\rho > 1/\sqrt{\tfrac{P\|\mathbf{h}_1\|^2}{\epsilon\sigma^2}-1-\epsilon \cos^2\theta}$ and $\cos\theta>\sqrt{\tfrac{\epsilon\sigma^{2}(\epsilon+1)}
           {P\|{\bf h}_1\|^{2}-\epsilon\sigma^{2}}}$.
When these conditions hold, NOMA is feasible and we use it. Otherwise, we use SDMA or OMA instead.

\subsection{Feasibility Decision in Limited-Feedback}
In systems with limited feedback, the BS can decide whether to use NOMA based on similar conditions:
\begin{itemize}
\item $\beta_q$ exists.
\item $\beta_q$ is within $(0,1)$: $P\hat H_{11}\ge \epsilon\sigma^2$.
\end{itemize}
The existence of $\beta_q$ results in  $\hat\rho > 1/\sqrt{\tfrac{P\hat{H}_{11}}{\epsilon\sigma^2}-1-\epsilon \cos^2\hat\theta}$ and $\cos\hat\theta>\sqrt{\tfrac{\epsilon\sigma^{2}(\epsilon+1)}
           {P \hat{H}_{11}-\epsilon\sigma^{2}}}$, where $\cos\hat{\theta}=\tfrac{|\mathbf{w}_1^H\mathbf{w}_2|}{\|\mathbf{w}_1\|\|\mathbf{w}_2\|}$ and $\hat{\rho}=\sqrt{\tfrac{\hat{H}_{22}}{\hat{H}_{11}}}$. 

However, due to quantization errors, applying the same decision criterion as in the full CSI case leads to marginal regions where the feasibility of NOMA is uncertain. Because of quantization, $\hat{H}_{11} \le \eta_{11}\|\mathbf{h}_1\|^2 < \hat{H}_{11}+\delta$ and a sufficient condition for the feasibility of NOMA is  $\cos\hat{\theta}>\sqrt{\tfrac{\epsilon\sigma^{2}(\epsilon+1)} {P (\eta_{11}H_1-\delta)-\epsilon\sigma^{2}}}$ and $\hat{\rho} > 1/\sqrt{\tfrac{P(\eta_{11}{H}_{1}-\delta)}{\epsilon\sigma^2}-1-\epsilon \cos^2\hat{\theta}}$. The areas above the blue curve in Fig.~\ref{fig:3} show the feasible region where NOMA is preferable in the $(\cos\theta,\rho)$ plane.

Note that $\hat{H}_{11}=0$ and $\hat{H}_{22}=0$ result in $\hat{\beta}_{1,min}=\infty$ and $\hat{\beta}_{2,max}<0$, respectively. In both cases, NOMA is not feasible, and SDMA or OMA is adopted instead.

\begin{figure}[ht]
\centering
\begin{tikzpicture}[>=stealth, scale=1.15]
\draw[->] (0,0) -- (5.0,0)
    node[right] {$\cos{\theta}$};
\draw[->] (0,0) -- (0,3.8)
    node[above] {$\rho$};
\node[below] at (0,0) {0};

\def\cmin{0.8}   
\def\cmax{4.0}   
\def\rhoTop{3.7} 

\draw[dashed, gray] (\cmin,0) -- (\cmin,3.7);
\node[below] at (1,0.0) {\scalebox{0.8}{$\sqrt{\frac{\epsilon\sigma^{2}(\epsilon+1)}
           {P\|{\bf h}_1\|^{2}-\epsilon\sigma^{2}}}$}};

\draw[dashed, gray] (\cmax,0) -- (\cmax,3.7);
\node[below] at (\cmax,0) {$1$};

\draw[dashed, gray] (0,0.6) -- (4,0.6);
\node[below] at (-0.65,1.0) {$\sqrt{\frac{\epsilon\sigma^2}{P\|\mathbf{h}_1\|^2}}$};

\node[left] at (0.9,3.0) {
    $\begin{aligned}
    &\cos\hat{\theta}=\\&\sqrt{\tfrac{\epsilon\sigma^{2}(\epsilon+1)} {P (\eta_{11}H_1-\delta)-\epsilon\sigma^{2}}}
    \end{aligned}$
};

\draw[thick,smooth]
    plot coordinates {
        (\cmin,0.8)
        (2.0,1.0)
        (2.6,1.3)
        (3.2,1.9)
        (3.6,2.6)
        (3.9,3.6)
    };

\begin{scope}
  \clip (\cmin,0) rectangle (\cmax,\rhoTop);
  \fill[gray!25]
    (\cmin,0.8)
    -- (2.0,1.0)
    -- (2.6,1.3)
    -- (3.2,1.9)
    -- (3.6,2.6)
    -- (3.9,3.6)
    -- (3.9,\rhoTop)
    -- (\cmin,\rhoTop)
    -- cycle;
\end{scope}

\path
    (\cmin+0.1,0.9) coordinate (A1)
    (2.0,1.1)   coordinate (A2)
    (2.6,1.4)   coordinate (A3)
    (3.2,2.0)   coordinate (A4)
    (3.6,2.7)   coordinate (A5)
    (3.9,3.7)   coordinate (A6);
    \draw[thick,blue!70,densely dashed,smooth]
    (A1) -- (A2) -- (A3) -- (A4) -- (A5) -- (A6);

\draw[thick,blue!70,densely dashed,smooth] (0.9,0.9) -- (0.9,3.8);

\node at (2.3,2.7) {NOMA region};
\node[anchor=west] at (1.53,0.85)
  {$\hat{\rho} = 1/\sqrt{\tfrac{P(\eta_{11}{H}_{1}-\delta)}{\epsilon\sigma^2}-1-\epsilon \cos^2\hat{\theta}}$};
\draw[->, blue] (2.6,1.4) -- (3.1,1.2);
\draw[->,blue] (0.9,1.5) -- (0.1,2.5);

\end{tikzpicture}
\caption{Full-CSI and limited feedback NOMA feasible regions in the $(\cos\theta,\rho)$ plane.}
\label{fig:3}
\end{figure}

\section{Rate Loss Analysis}\label{sec:loss}
For the region where NOMA is feasible in both full-CSI and limited feedback, we can evaluate the rate loss compared with the full CSI case by
\[
\Delta R(H,\hat H)=R_{\mathrm N}^{\mathrm{full}}(H;\beta^*)-R_{\mathrm N}^{\mathrm{LF}}(H;\beta_q).
\]
When NOMA is feasible, there are two cases: (i) $\delta<H_{ii}<(2^B-1)\delta$ and (ii) $H_{ii}>(2^B-1)\delta$. In what follows, we will derive the results for Case (i). The results for Case (ii) can be obtained similarly by replacing $\delta$ with $\Delta_{sat}$.

\begin{theorem}\label{thm:1}
The achievable rate loss is upper bounded by 
\begin{equation}
\begin{aligned}
   \Delta R
& \le C_1H_1(1-\eta_{11})+C_2H_2^2\sqrt{2-2\eta_{11}}\\&+C_3H_2(1-\eta_{22})
+C_4(H_1+H_2)|\Delta \beta|),
\end{aligned}
\end{equation}
where $C_1$ to $C_4$ are constants that are independent of $B$ and $B'$, $\Delta\beta\triangleq\beta_q-\beta^*$ is the difference in power allocation.
\end{theorem}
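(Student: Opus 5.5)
We compare the full-CSI sum rate $R^{\mathrm{full}}(H;\beta^*)$ with the limited-feedback rate $R^{\mathrm{LF}}(H;\beta_q)$ through the intermediate quantity $R^{\mathrm{full}}(H;\beta_q)$: the full-CSI channel gains evaluated at the limited-feedback power split. This gives the decomposition
\[
\Delta R=\big[R^{\mathrm{full}}(\beta^*)-R^{\mathrm{full}}(\beta_q)\big]+\big[R^{\mathrm{full}}(\beta_q)-R^{\mathrm{LF}}(\beta_q)\big].
\]
The first bracket isolates the power-allocation mismatch. The second isolates the beamforming (phase) mismatch at a fixed $\beta_q$. Throughout, both configurations lie in the NOMA-feasible region of Section~\ref{sec:region}, so every SINR is at least $\varepsilon$. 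In Case (i) the gains are at least $\delta$, so all SINR denominators are bounded below by $\sigma^2$. Hence $\log_2(1+\cdot)$ and the SINR maps are Lipschitz with constants depending only on $P,\sigma^2,\varepsilon$, and not on $B$ or $B'$.

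For the first bracket, I write $R(\beta)=\log_2(1+\beta P H_1/\sigma^2)+\log_2\!\big(1+\tfrac{(1-\beta)PH_2}{\beta P|\mathbf h_2^H\mathbf h_1|^2/H_1+\sigma^2}\big)$ and apply the mean value theorem in $\beta$. The derivative is bounded by $\tfrac{P}{\sigma^2\ln 2}(H_1+H_2)$ up to constants, because the interference term $|\mathbf h_2^H\mathbf h_1|^2/H_1\le H_2$. This gives $C_4(H_1+H_2)|\Delta\beta|$. No bound on $\Delta\beta$ itself is required, since it appears explicitly in the statement.

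For the second bracket, at fixed $\beta=\beta_q$, I bound each user's rate difference by the mean value theorem in the relevant gain.

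For the strong user, $\log_2(1+\beta PH_1/\sigma^2)-\log_2(1+\beta P\eta_{11}H_1/\sigma^2)\le \tfrac{P}{\sigma^2\ln2}H_1(1-\eta_{11})$. This is the $C_1$ term.

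For the weak user, the signal part changes from $H_2$ to $\eta_{22}H_2$. With the denominator bounded below by $\sigma^2$, this contributes $C_3H_2(1-\eta_{22})$.

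The interference part changes from $|\mathbf h_2^H\mathbf h_1|^2/H_1$ to $|\mathbf h_2^H\mathbf w_1|^2$. Write $\mathbf u_1=\mathbf h_1/\|\mathbf h_1\|$ and choose the phase of $\mathbf w_1$ so that $\mathbf u_1^H\mathbf w_1=\sqrt{\eta_{11}}$ is real. Then
\[
\big||\mathbf h_2^H\mathbf u_1|^2-|\mathbf h_2^H\mathbf w_1|^2\big|\le (|\mathbf h_2^H\mathbf u_1|+|\mathbf h_2^H\mathbf w_1|)\,|\mathbf h_2^H(\mathbf u_1-\mathbf w_1)|\le 2H_2\|\mathbf u_1-\mathbf w_1\|,
\]
where $\|\mathbf u_1-\mathbf w_1\|=\sqrt{2-2\sqrt{\eta_{11}}}\le\sqrt{2-2\eta_{11}}$.

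The derivative of the weak user's rate with respect to the interference gain is at most $\beta P\cdot (1-\beta)PH_2/\sigma^4$ up to constants. That extra factor of $H_2$ produces $C_2H_2^2\sqrt{2-2\eta_{11}}$.

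The SIC constraint at the strong user does not enter the rate expression directly. It only matters through feasibility, which we assume holds. Collecting the terms gives the bound.

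The main obstacle is the interference term. Its derivative must be controlled uniformly without letting constants depend on $B,B'$. This needs the lower bound $\sigma^2$ on the denominators and the feasibility assumption, which keeps $\beta_q\in(0,1)$. I would also verify that the phase normalization of $\mathbf w_1$ is legitimate, which it is because rates depend only on $|\cdot|$. Case (ii) follows by replacing $\delta$ with $\Delta_{sat}$; since $\delta$ does not appear in the bound, the same constants work there.
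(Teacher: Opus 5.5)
Your proposal is correct and follows essentially the same route as the paper. It uses the same split of $\Delta R$ through $R^{\mathrm{full}}(\beta_q)$, the same mean-value bound in $\beta$ for the power-allocation term (Lemma~\ref{le:3}), and the same treatment of the beamforming term (Lemma~\ref{le:4}): strong-user loss, weak-user signal loss, and an interference mismatch bounded by $2H_2\sqrt{2-2\eta_{11}}$ exactly as in Lemma~\ref{lemma:S1}. Two of your steps are cleaner than the paper's. First, $\bigl||a|^2-|b|^2\bigr|\le(|a|+|b|)\,|a-b|$ with an explicit phase normalization of $\mathbf w_1$ replaces the paper's complex factorization $(a+b)(a-b)$, which is not an identity for complex numbers. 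Second, the mean value theorem for the strong user's $\beta$-term avoids the paper's $|\log_2(1+x)|\le|x|/\ln 2$, which fails when $\Delta\beta<0$; neither slip changes the final constants.
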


\noindent\textit{Proof}. See Appendix ~\ref{app:thm1}.

\begin{lemma}\label{le:1}
In the above MISO-NOMA system, the optimal $\beta^*$ is either $\beta_{\min}=\beta_{1,\min}$, $\beta_{\max}=\min\{\beta_{2,\max}, \beta_{SIC,\max}\}$, or the stationary point $\beta_0$.
\end{lemma}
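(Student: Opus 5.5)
The plan is to treat the sum rate as a one-dimensional function of $\beta$ on the feasible interval $\mathcal{B}=[\beta_{\min},\beta_{\max}]$ and use the elementary fact that a continuous function on a compact interval attains its maximum either at an endpoint or at an interior stationary point.

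\textbf{Step 1: write the sum rate.} With the beamformers fixed, the SINR expressions of Section~\ref{sec:model} give
\[
R(\beta)=\log_2\!\Big(1+\tfrac{\beta P H_{11}}{\sigma^2}\Big)+\log_2\!\Big(1+\tfrac{(1-\beta)PH_{22}}{\beta PH_{21}+\sigma^2}\Big).
\]
Each term is a smooth function of $\beta$ on $[0,1]$.

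\textbf{Step 2: rewrite and differentiate.} Combining the second logarithm gives
\[
R(\beta)=\log_2(\sigma^2+\beta PH_{11})+\log_2\big(\sigma^2+PH_{22}+\beta P(H_{21}-H_{22})\big)-\log_2(\beta PH_{21}+\sigma^2)+\text{const}.
\]
Setting $R'(\beta)=0$ and clearing denominators gives a polynomial equation in $\beta$ of degree at most two, so there are at most two stationary points.

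\textbf{Step 3: identify the relevant stationary point.} I will show that at most one root lies in $(\beta_{\min},\beta_{\max})$, or at least that at most one root can be an interior maximizer. I would try the sign pattern of $R'$ first: $R'$ is a ratio whose denominator is positive, and whose numerator is the quadratic. Counting its sign changes on the interval should show that the interior critical set consists of at most one local maximum, which I call $\beta_0$. If two roots do fall inside the interval, one is a local minimum and the other a local maximum. Since a local minimum can never be the global maximizer, $\beta_0$ is still defined as the remaining local-maximum root, and the claim is unaffected.

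\textbf{Step 4: conclude.} $\mathcal{B}$ is nonempty and compact whenever NOMA is feasible, with $\beta_{\min}=\beta_{1,\min}$ and $\beta_{\max}=\min\{\beta_{2,\max},\beta_{SIC,\max}\}$, both clipped to $[0,1]$. The constraints $R_1,R_2\ge R_{\mathrm{th}}$ and the SIC condition are exactly the interval endpoints. So the maximizer $\beta^*$ of the continuous function $R$ is either an endpoint or an interior point with $R'(\beta^*)=0$, i.e.\ $\beta_0$.

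\textbf{Main obstacle.} The only delicate part is Step 3, the bookkeeping of the quadratic's roots when $H_{21}<H_{22}$ versus $H_{21}\ge H_{22}$. As noted there, however, the lemma only needs the optimum to lie among the endpoints and the critical points, so even crude root counting suffices.
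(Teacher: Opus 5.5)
Your proposal is correct and follows essentially the same route as the paper. Both treat the sum rate as a smooth function of $\beta$ on the feasible interval, observe that clearing denominators in $R'(\beta)=0$ gives a quadratic, and conclude that the maximizer is an endpoint or the interior stationary point.

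The only difference is in Step 3. The paper simply asserts, using $H_{11}>H_{22}>H_{21}$, that the quadratic has at most one root in $(0,1)$. This is in fact true: after cross-multiplying the stationarity condition, one side is strictly decreasing and the other nondecreasing in $\beta\ge 0$. Your two-root case is therefore vacuous. You instead sidestep the question by noting that of two simple roots at most one can be a local maximum, and that suffices equally well for the lemma.
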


\noindent\textit{Proof}. See Appendix ~\ref{app:lemma1}.

\begin{lemma}\label{le:2}
For the Rayleigh channels, when $N_t>2$, the loss induced by $\left|\Delta\beta\right|$ converges to 0 at least exponentially as $B$ and $B'$ increase.
\end{lemma}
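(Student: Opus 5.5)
We need to show that the term $C_4(H_1+H_2)|\Delta\beta|$ in Theorem~\ref{thm:1} vanishes at least exponentially in $B$ and $B'$. Since $C_4$ is independent of $B,B'$ and $H_1+H_2$ has finite moments under Rayleigh fading, it suffices to show that $|\Delta\beta|=|\beta_q-\beta^*|$ decays exponentially. By Lemma~\ref{le:1}, both $\beta^*$ and $\beta_q$ lie in the finite candidate set $\{\beta_{\min},\beta_{\max},\beta_0\}$, evaluated with full CSI and with the quantized gains respectively. The plan is therefore to bound $|\Delta\beta|$ by the perturbation of each candidate, plus the probability that the two systems select different candidates.

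First, I will show that each candidate is a Lipschitz function of the gains $(H_{11},H_{22},H_{12},H_{21})$ on the feasible region. This holds because the denominators $P\hat H_{22}+\varepsilon P\hat H_{21}$ and $P\hat H_{12}+\varepsilon P\hat H_{11}$ are bounded away from zero there. It also covers the stationary point $\beta_0$, which is obtained from $\partial R/\partial\beta=0$ and is an explicit rational function of the gains. The gain perturbations have two sources:
\begin{itemize}
\item amplitude quantization, which is bounded by $\delta\propto 2^{-B}$;
\item direction quantization, which enters through $1-\eta_{ii}$ and through the estimated cross term $\hat H_{ij}=\hat H_{ii}|\mathbf w_i^H\mathbf w_j|^2$ versus the true value $|\mathbf h_i^H\mathbf h_j|^2/\|\mathbf h_j\|^2$.
\end{itemize}
The cross-term error is controlled by $\|\mathbf w_k-\mathbf h_k/\|\mathbf h_k\|\|\le\sqrt{2-2\eta_{kk}}$, up to phase.

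Second, I will use the standard RVQ result: for Rayleigh channels, $1-\eta_{kk}$ follows the minimum of $2^{B'}$ i.i.d.\ $\mathrm{Beta}(N_t-1,1)$ variables. This gives
\[
\mathbb E[1-\eta_{kk}]\le 2^{B'}\mathrm{B}\!\left(2^{B'},\tfrac{N_t}{N_t-1}\right)\approx \Gamma\!\left(\tfrac{N_t}{N_t-1}\right)2^{-B'/(N_t-1)},
\]
and a similar exponential bound for $\mathbb E[\sqrt{1-\eta_{kk}}]$, namely $\propto 2^{-B'/(2(N_t-1))}$. Combining this with the Lipschitz step yields $\mathbb E|\Delta\beta|\le c_1 2^{-B}+c_2 2^{-B'/(2(N_t-1))}$ on the event that both systems select the same candidate. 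I will apply Cauchy--Schwarz to separate $H_1+H_2$ from the Lipschitz constants.

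Third, I will handle the mismatch event, where the active constraint or stationary branch differs between the two systems. This event requires the true gains to lie within $O(\delta+\sqrt{1-\eta})$ of a switching boundary. Because the joint density of the gains is bounded, its probability is also exponentially small. On this event $|\Delta\beta|\le 1$, so its contribution decays exponentially as well.

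I expect this last step, together with the uniformity of the Lipschitz constants near the edges of the feasible region, to be the main obstacle. The constants blow up as $\beta_{\min}\to\beta_{\max}$. I would first try to restrict attention to the region where both systems are feasible, with a margin proportional to the perturbation size, and bound the excluded strip by its probability.

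The condition $N_t>2$ is where the integrability enters. Under Rayleigh fading the cross-gain ratios, such as $\cos^2\theta$ appearing in the Lipschitz constants through inverse powers of $H_{12}$, behave like $\mathrm{Beta}(1,N_t-1)$. The needed inverse moments, such as $\mathbb E[1/\cos^2\theta]$, are finite when $N_t>2$, so the expected constants are finite and the exponential rate carries through.
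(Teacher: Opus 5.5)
Your handling of $\beta_{\min}$ and $\beta_{\max}$ works. On the region where both systems are feasible, $H_1,\hat H_{11},H_2,\hat H_{22}>\varepsilon\sigma^2/P$, so these candidates are uniformly Lipschitz, and this is simpler than the paper's tail-splitting of $\mathbb{E}[(1-\eta_{11})/\eta_{11}]$. The genuine gap is at the stationary point, and it also undermines your explanation of $N_t>2$. $\beta_0$ is not a rational function with denominators bounded away from zero. It is the positive root of $f(\beta)=a\beta^2+b\beta+c$, with $a^*=H_1H_{21}^*(H_2-H_{21}^*)$, $b^*=2H_1\sigma^2(H_2-H_{21}^*)$ and $c^*<0$. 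A mean-value step gives
\[
|\Delta\beta_0|=\frac{|f^*(\beta_{0,q})-f_q(\beta_{0,q})|}{2a^*\xi+b^*}\le\frac{|\Delta a|+|\Delta b|+|\Delta c|}{b^*},\qquad b^*=2H_1H_2\sigma^2\sin^2\theta .
\]
This sensitivity is genuinely unbounded on the feasible region. Feasibility lower-bounds $\cos\theta$ but not $\sin\theta$, and $\beta^*=\beta_0$ remains possible for nearly aligned channels when $H_1-H_2$ is small. What makes the expectation finite is $\mathbb{E}[1/\sin^2\theta]=(N_t-1)/(N_t-2)$, with $\sin^2\theta\sim\mathrm{Beta}(N_t-1,1)$ independent of $H_1,H_2$. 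This is exactly where the paper needs $N_t>2$ (Step~3 of the proof of Lemma~\ref{lemma:5}). The moment you invoke instead does not exist: $\cos^2\theta\sim\mathrm{Beta}(1,N_t-1)$ has density $N_t-1>0$ at the origin, so $\mathbb{E}[1/\cos^2\theta]=\infty$ for every $N_t$. No inverse power of $H_{12}$ is ever needed either. The denominators of $\beta_{2,\max}$ and $\beta_{\mathrm{SIC},\max}$ are bounded below by $PH_{22}$ and $\varepsilon PH_{11}$, and feasibility already forces $PH_{12}>\varepsilon\sigma^2$. So your last step, as written, does not close.

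For the candidate-switching event, the paper does not estimate a probability. It argues that when the limited-feedback system lands on a different candidate, the loss computed with matched candidates is still an upper bound. Your bounded-density tube argument is harder than necessary, because the distance to the switching surface must be compared with $|\Delta\beta_0|$, which carries the same $1/\sin^2\theta$ factor. You can drop it entirely. We have $a,b\ge 0$ and $c<0$, since $H_{21}\le H_{22}\le H_{11}$ and likewise $\hat H_{21}\le\hat H_{22}\le\hat H_{11}$. Hence $\partial R/\partial\beta$ is positive below the unique positive root and negative above it, so the sum rate is unimodal and
\[
\beta^*=\max\{\beta_{\min},\min\{\beta_0,\beta_{\max}\}\},\qquad \beta_q=\max\{\hat\beta_{\min},\min\{\beta_{0,q},\hat\beta_{\max}\}\}.
\]
This map is $1$-Lipschitz in the sup norm, so $|\Delta\beta|\le\max\{|\Delta\beta_{\min}|,|\Delta\beta_{\max}|,|\min\{\beta_0^*,1\}-\min\{\beta_{0,q},1\}|\}$. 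The last term obeys the same $(|\Delta a|+|\Delta b|+|\Delta c|)/b^*$ bound, including when only one of the two roots exceeds $1$. With this, no switching boundary needs to be controlled.
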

\noindent\textit{Proof}. See Appendix ~\ref{app:lemma2}.

\begin{theorem}\label{thm:2}
    For Rayleigh channel, when $N_t>2$, the achievable rate loss converges to zero at least exponentially as $B$ and $B'$ increase.
\end{theorem}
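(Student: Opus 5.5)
We start from the bound of Theorem~\ref{thm:1} and show that the expected value of each of its four terms decays at least exponentially in $B$ and $B'$. The first three terms come only from phase quantization, i.e., from the RVQ beamforming loss $1-\eta_{kk}$. The fourth term is the power-allocation mismatch, and Lemma~\ref{le:2} handles it. Since the constants $C_1,\dots,C_4$ do not depend on $B$ and $B'$, it is enough to bound expectations of the channel-dependent factors.

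\textbf{Phase terms.} Under Rayleigh fading, $\mathbf h_k/\|\mathbf h_k\|$ is isotropic and independent of $H_k=\|\mathbf h_k\|^2$. The quantity $1-\eta_{kk}$ is the minimum over $2^{B'}$ i.i.d.\ Beta$(N_t-1,1)$ variables. The standard RVQ result gives
\[
\mathbb E[1-\eta_{kk}]=2^{B'}\,\mathrm{B}\!\left(2^{B'},\tfrac{N_t}{N_t-1}\right)\le 2^{-\frac{B'}{N_t-1}}.
\]
By independence, $\mathbb E[H_1(1-\eta_{11})]=N_t\,\mathbb E[1-\eta_{11}]$, which decays exponentially in $B'$. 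The same holds for the $C_3$ term.

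For the $C_2$ term, Jensen's inequality gives
\[
\mathbb E\sqrt{2-2\eta_{11}}\le\sqrt{2\,\mathbb E[1-\eta_{11}]}\le \sqrt2\,2^{-\frac{B'}{2(N_t-1)}}.
\]
The factor $H_2^2$ is independent of user 1's quantization and has finite mean $N_t(N_t+1)$, so this term also decays exponentially.

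\textbf{Power-allocation term.} The plan is to bound $\mathbb E[(H_1+H_2)|\Delta\beta|]$ by Cauchy--Schwarz:
\[
\mathbb E[(H_1+H_2)|\Delta\beta|]\le \sqrt{\mathbb E[(H_1+H_2)^2]}\;\sqrt{\mathbb E|\Delta\beta|^2}.
\]
The first factor is a finite constant. Lemma~\ref{le:2} states that the loss induced by $|\Delta\beta|$ vanishes at least exponentially. Using Lemma~\ref{le:1}, $\beta^*$ and $\beta_q$ are each one of the boundary points $\beta_{\min}$, $\beta_{\max}$ or the stationary point $\beta_0$. Each of these is a smooth function of $(H_{ii},H_{ij})$ on the feasible region, where the denominators are bounded away from zero.

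So $|\Delta\beta|$ is Lipschitz-bounded by the gain errors. The amplitude errors are controlled by $|H_{ii}-\hat H_{ii}|<\delta\propto2^{-B}$. The direction errors are controlled by $|H_{ij}-\hat H_{ij}|$, which is proportional to $\sqrt{1-\eta}$ terms. Both therefore decay exponentially in $B$ and $B'$, which gives the fourth term.

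\textbf{Main obstacle.} The difficult step is making the Lipschitz argument uniform. Near the edges of the feasible region, and in the saturation case (ii), the derivatives of $\beta$ can blow up and $\Delta_{sat}$ is not small. I would handle this in three steps:
\begin{itemize}
\item Split the channel space into a \emph{good} event, on which the derivatives are bounded and $H_{ii}<(2^B-1)\delta$.
\item On the bad event, use that the rate loss is at most the full-CSI rate, which has finite moments under Rayleigh fading.
\item Show that the bad event has probability decaying exponentially. Saturation requires $H_{ii}\gtrsim 2^B\delta$, and the chi-square tail makes this exponentially unlikely for a suitable choice of $\delta$. The marginal feasibility region has width $O(\delta+\sqrt{1-\eta})$, and Rayleigh channels have a bounded density there.
\end{itemize}
Combining the good and bad events gives an expected rate loss that decays at least exponentially in both $B$ and $B'$.
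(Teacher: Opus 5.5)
Your treatment of the three phase terms, and the Cauchy--Schwarz step for the fourth term, match the paper's proof. Your exact RVQ expression and the Jensen step for $\sqrt{2-2\eta_{11}}$ are, if anything, cleaner than the paper's approximations. The gap is in how you control $\mathbb{E}|\Delta\beta|^2$.

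The paper closes this in one line: $|\Delta\beta|\le 1$ gives $\mathbb{E}|\Delta\beta|^2\le\mathbb{E}|\Delta\beta|$, and Lemma~\ref{le:2} (through Lemma~\ref{lemma:5}) supplies exponential decay of $\mathbb{E}|\Delta\beta|$. You instead re-derive it with a Lipschitz argument. Its key premise, that $\beta_{\min},\beta_{\max},\beta_0$ are smooth in the gains with denominators bounded away from zero on the feasible region, fails for the stationary point.

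With $\beta_0$ the root of $f^*(\beta)=a^*\beta^2+b^*\beta+c^*$, one has $|\Delta\beta_0|=|f^*(\beta_{0,q})-f_q(\beta_{0,q})|/|f^{*\prime}(\xi)|$. The numerator contains terms such as $\sigma^4(\hat H_{22}-H_2)$ that do not vanish as $\sin\theta\to0$. By contrast, $a^*,b^*\propto\sin^2\theta$, so $f^{*\prime}(\xi)=2a^*\xi+b^*=O(\sin^2\theta)$ and the sensitivity of $\beta_0$ grows like $1/\sin^2\theta$. For nearly aligned users of comparable strength, $\beta_0$ is the active candidate, and such users lie deep inside the NOMA region (large $\cos\theta$), not near its boundary.

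So neither of your bad events covers them: saturation is irrelevant here, and so is the $O(\delta+\sqrt{1-\eta})$ band around the feasibility boundary. Moreover, if the good event is defined by a fixed derivative bound, its complement has probability independent of $B,B'$, so the split yields no decay. A telling symptom is that your argument never uses $N_t>2$. In the paper this is exactly where the hypothesis enters: $\sin^2\theta\sim\mathrm{Beta}(N_t-1,1)$ gives $\mathbb{E}[1/\sin^2\theta]=(N_t-1)/(N_t-2)<\infty$, which allows taking expectations of $|\Delta a|/b^*$, $|\Delta b|/b^*$, $|\Delta c|/b^*$ directly.

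To repair your route, either invoke Lemma~\ref{le:2} with $|\Delta\beta|\le 1$ as the paper does, or let the truncation level shrink with the feedback rate:
\begin{itemize}
\item on $\{\sin^2\theta\ge s\}$, use $|\Delta\beta_0|\le(|\Delta a|+|\Delta b|+|\Delta c|)/(2H_1H_2\sigma^2 s)$;
\item on the complement, use $|\Delta\beta|\le 1$ and $\Pr(\sin^2\theta<s)=s^{N_t-1}$;
\item take $s$ to be a suitable power of $\delta+2^{-B'/(2(N_t-1))}$.
\end{itemize}
This would even remove the need for $N_t>2$ at this step, at the cost of a smaller exponent.

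You also need to justify passing from the three candidates to $\beta^*$ and $\beta_q$ when the active candidate differs between full CSI and limited feedback. A clean way: $\partial R/\partial\beta$ has the sign of $-f(\beta)$, with $f$ increasing on $[0,\infty)$. Hence $R$ is unimodal and $\beta^*=\max\{\beta_{\min},\min\{\beta_0,\beta_{\max}\}\}$ (likewise for $\beta_q$), which gives $|\Delta\beta|\le\max\{|\Delta\beta_0|,|\Delta\beta_{\min}|,|\Delta\beta_{\max}|\}$.

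Finally, your saturation step requires $(2^B-1)\delta\to\infty$, i.e., $\delta$ decaying like $B2^{-B}$ rather than exactly like $2^{-B}$. That is a legitimate choice, consistent with the trained values in Table~\ref{tab:table2}, but you should state it. With $\delta\propto 2^{-B}$ the saturation threshold stays bounded and $\Delta_{sat}$ does not vanish.
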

\noindent\textit{Proof}. Since $\mathbf{h}\sim \mathcal{CN}(\mathbf{0},\mathbf{I_{N_t}})$, $H_i$ follows the Gamma distribution, $\mathbb{E}[H_1] = N_t$. For the Rayleigh channel and RVQ, the CDF of $\eta_{ii}$ is $F_{\eta}(\eta)=\left(1-(1-\eta)^{N_t-1}\right)^{2^{B'}}$~\cite{4100151}. And thus its expectation is~\cite{1715541}: $\mathbb{E}[\eta_{ii}] \approx 1-2^{-B'/(N_t-1)}$.

Since $\eta_{11}$ and $\eta_{22}$ depend only  on the phase difference, they are independent of the amplitude terms. It can be proved that each item converges to 0 at least exponentially as follows:
\begin{equation}
\begin{aligned}
    \mathbb{E}[C_1H_1(1-\eta_{11})]&=C_1\mathbb{E}[H_1]\mathbb{E}[1-\eta_{11}]\\
    &\approx C_1N_t 2^{-B'/(N_t-1)}
\end{aligned}
\end{equation}

\begin{equation}
\begin{aligned}
    \mathbb{E}[C_2H_2^2\sqrt{2-\eta_{11}}]&=C_2 \mathbb{E}[H_2^2]\mathbb{E}[\sqrt{2-\eta_{11}}]\\
    &\le C_2\mathbb{E}[H_2^2]\sqrt{\mathbb{E}(1-\eta_{11})}\\
    &\approx N_t(N_t+1) 2^{-B'/2(N_t-1)}
\end{aligned}
\end{equation}

\begin{equation}
    \mathbb{E}[C_3H_2(1-\eta_{22})]\approx C_3N_t 2^{-B'/(N_t-1)}
\end{equation}

For the fourth item, from Cauchy–Schwarz inequality,
\begin{equation}
    \mathbb{E}[(H_1+H_2)|\Delta\beta|]\le \sqrt{\mathbb{E}(H_1+H_2)^2 \mathbb{E}(|\Delta\beta|)^2},
\end{equation}
Since we assume that $\mathbf{h}_1$ and $\mathbf{h}_2$ are independent,
$\mathbb{E}(H_1+H_2)^2 = \mathbb{E} H_1^2 +\mathbb{E} H_2^2 + 2\mathbb{E}[H_1H_2]= 2N_t(N_t+1)+2N_t^2$. Therefore, we have 
\begin{equation}
    \mathbb{E}[(H_1+H_2)|\Delta\beta|]\le \sqrt{(2N_t(N_t+1)+2N_t^2) \mathbb{E}(|\Delta\beta|)^2},
\end{equation}
Also, since $|\Delta\beta|<1$, we know $\mathbb{E}|\Delta\beta|^2\le \mathbb{E}|\Delta\beta|$. From Lemma~\ref{le:2}, $\mathbb{E}|\Delta\beta|$ converges to 0 at least exponentially, and we have the same conclusion for the fourth item. This concludes the proof of Theorem~\ref{thm:2}.

\section{Simulation Results}\label{sec:simu}
To show the effects of limited feedback on the sum rate, we first performed simulations with two transmit antennas by setting SNR at 10 dB and the rate threshold at $1\ bit/s/Hz$.  
We obtained simulation results with $10^6$ samples. 
The value of the quantizer step size $\delta$, for a given number of bits, is optimized using $10^5$ samples from $|\mathbf{h^H \mathbf{w}}|^2$, where $\mathbf{h}$ is Rayleigh. Table~\ref{tab:table2} shows the values of $\delta$ used in different quantizers as a function of the number of feedback bits.

\begin{table}[ht]
\centering
\caption{Value of $\delta$ for different $B$ and $B'$, trained with $|\mathbf{h^H \mathbf{w}}|^2$}
\label{tab:table2}
\begin{tabular}{l|cccccc}
\diagbox{$B'$}{$B$} & 1 & 2 & 3 & 4 & 5 & 6\\
\hline
1 & 1.59 & 0.97 & 0.60 & 0.36 & 0.22 & 0.13\\
2 & 1.70 & 1.06 & 0.65 & 0.39 & 0.23 & 0.13\\
3 & 1.81 & 1.11 & 0.69 & 0.42 & 0.24 & 0.14\\
4 & 1.92 & 1.16 & 0.71 & 0.43 & 0.25 & 0.14\\
5 & 1.98 & 1.20 & 0.73 & 0.44 & 0.26 & 0.15\\
6 & 2.00 & 1.22 & 0.75 & 0.44 & 0.26 & 0.15\\
\hline
\end{tabular}
\end{table}

Fig. \ref{fig:4} shows the average sum rate loss versus the number of feedback bits for the quantization of the amplitude when $B'$ is fixed. In this figure, we only consider the rate loss when NOMA is feasible in both full CSI and limited feedback scenarios. For any fixed $B'$, the rate loss performance hardly improves for $B$ larger than 3 bits. And for any fixed $B$, the rate loss performance saturates around $B'=5$ bits.

Fig. \ref{fig:5} shows the average sum rate results. When NOMA is not feasible, TDMA with equal time allocation between the two users is adopted. In both figures, the performance of the limited feedback system approaches that of the full CSI system as the feedback rate increases. 

\begin{figure}[ht]
    \centering
    \includegraphics[width=0.75\linewidth]{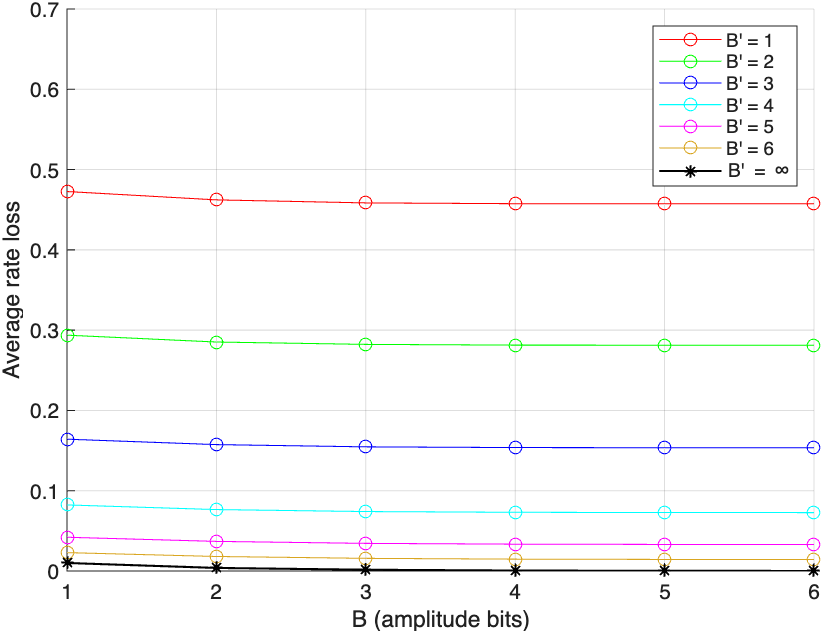}
    \caption{The rate loss for different $B$ values when $B'$ is fixed.}
    \label{fig:4}
\end{figure}

\begin{figure}[ht]
    \centering
    \includegraphics[width=0.75\linewidth]{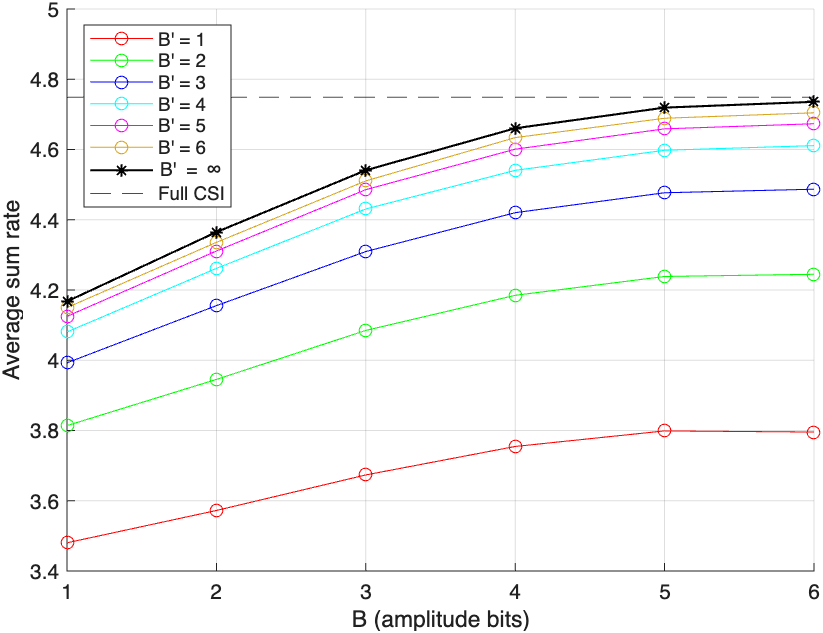}
    \caption{The average sum rate for different $B$  values when $B'$ is fixed.}
    \label{fig:5}
\end{figure}

To further explore the performance of the proposed limited feedback MISO-NOMA framework under different system settings, and to further illustrate the impact of feedback rates, we provide additional simulation results beyond the baseline settings. Specifically, we provide simulation results with three different settings, \(\mathrm{SNR}=15\) dB, \(R_{\rm th}=0.5\) bits/s/Hz and \(N_t=3\) transmit antennas, in Figs.~\ref{fig:15}, \ref{fig:Rth}, and \ref{fig:Nt3}, respectively.

\begin{figure*}[ht]
    \centering
    {\captionsetup[subfloat]{font=footnotesize}
    \subfloat[]{
        \includegraphics[width=0.4\textwidth]{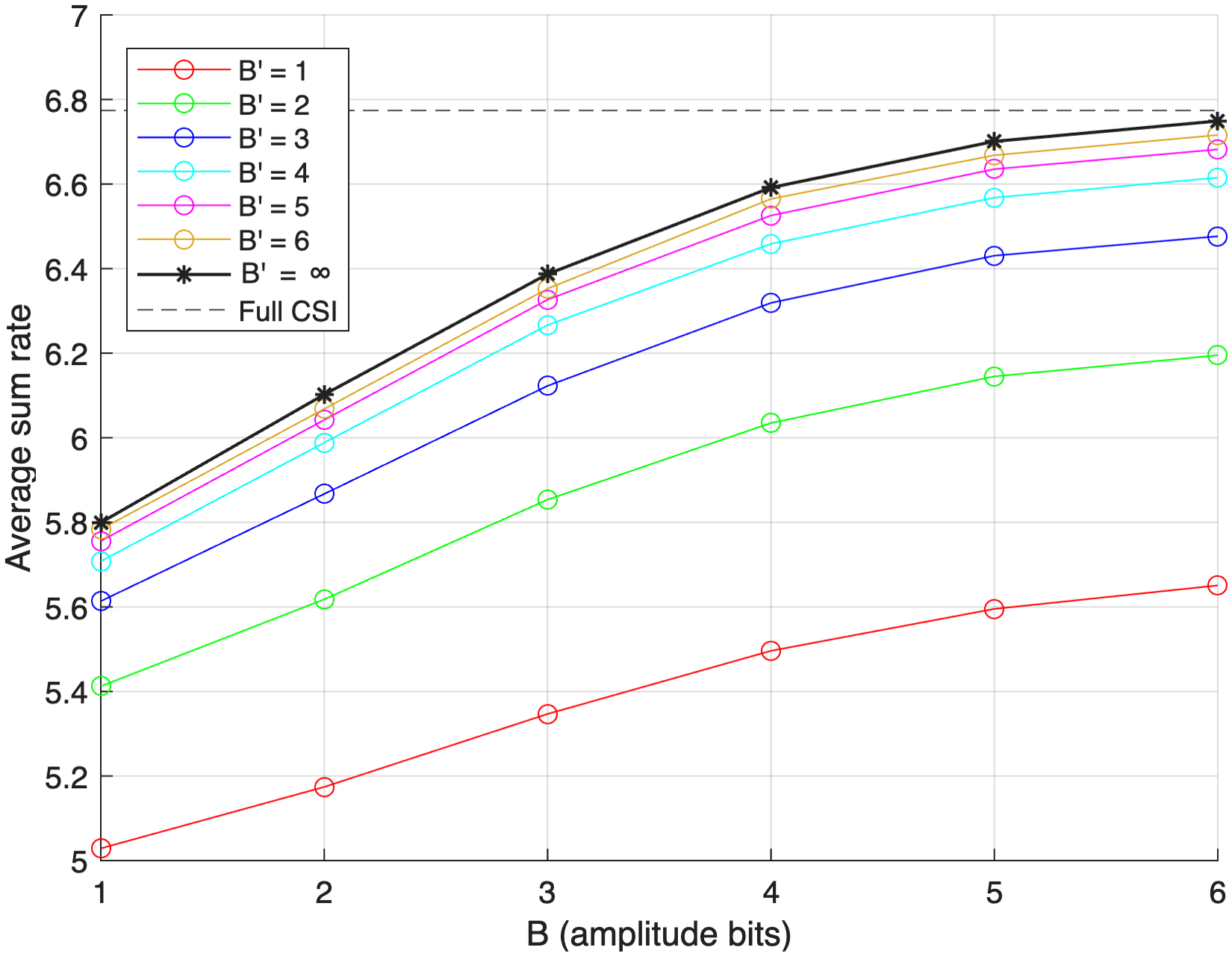}
        \label{fig:aver15}
    }
    \hspace{0.1\textwidth}
    \subfloat[]{
        \includegraphics[width=0.4\textwidth]{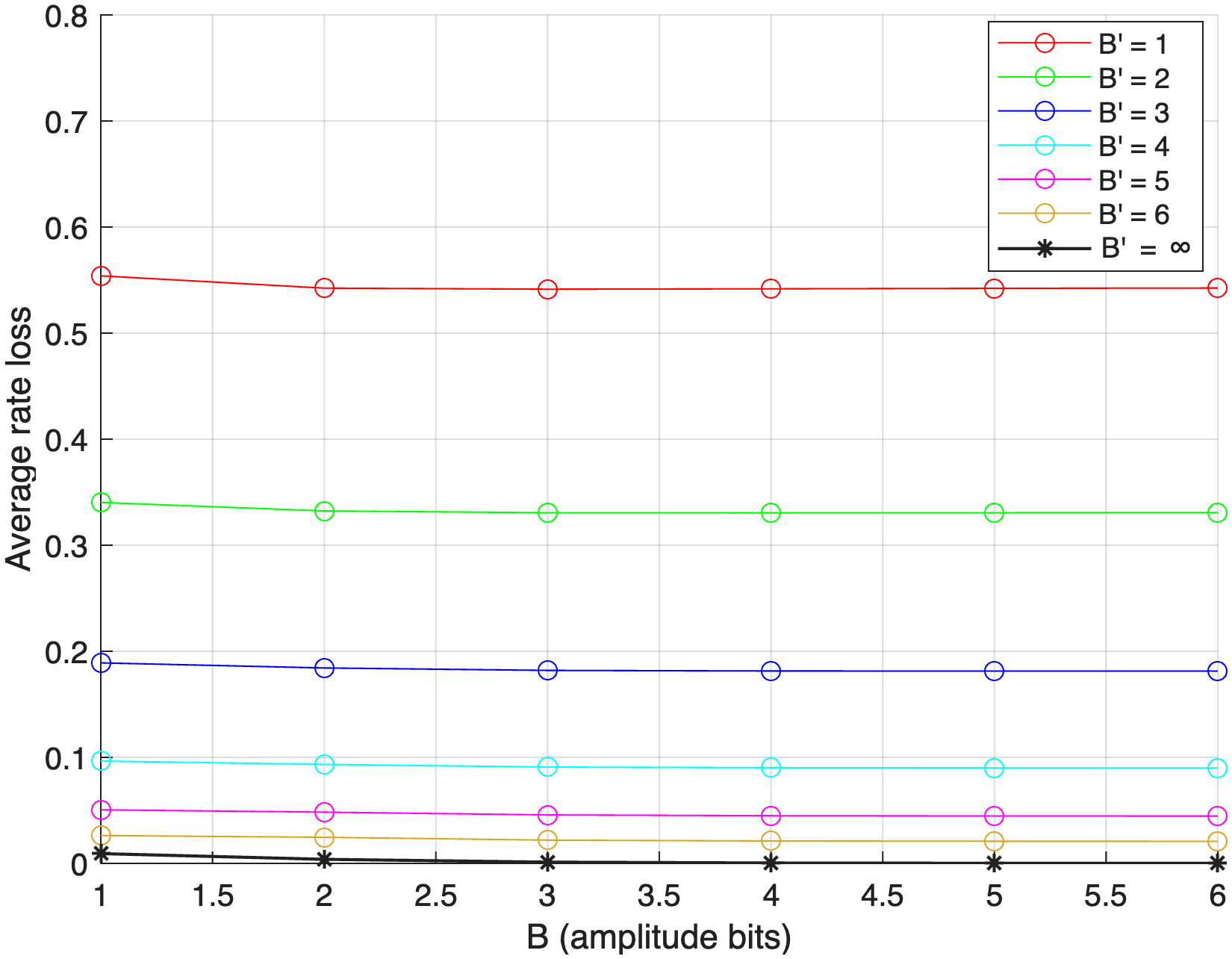}
        \label{fig:rateloss15}
    }}
    \caption{Simulation results under \(\mathrm{SNR}=15\) dB with \(N_t=2\) and \(R_{\rm th}=1\) bit/s/Hz, while other parameters are kept the same as in the baseline setting: (a) average sum rate and (b) rate loss.}
    \label{fig:15}
\end{figure*}

\begin{figure*}[ht]
    \centering
    {\captionsetup[subfloat]{font=footnotesize}
    \subfloat[]{
        \includegraphics[width=0.4\textwidth]{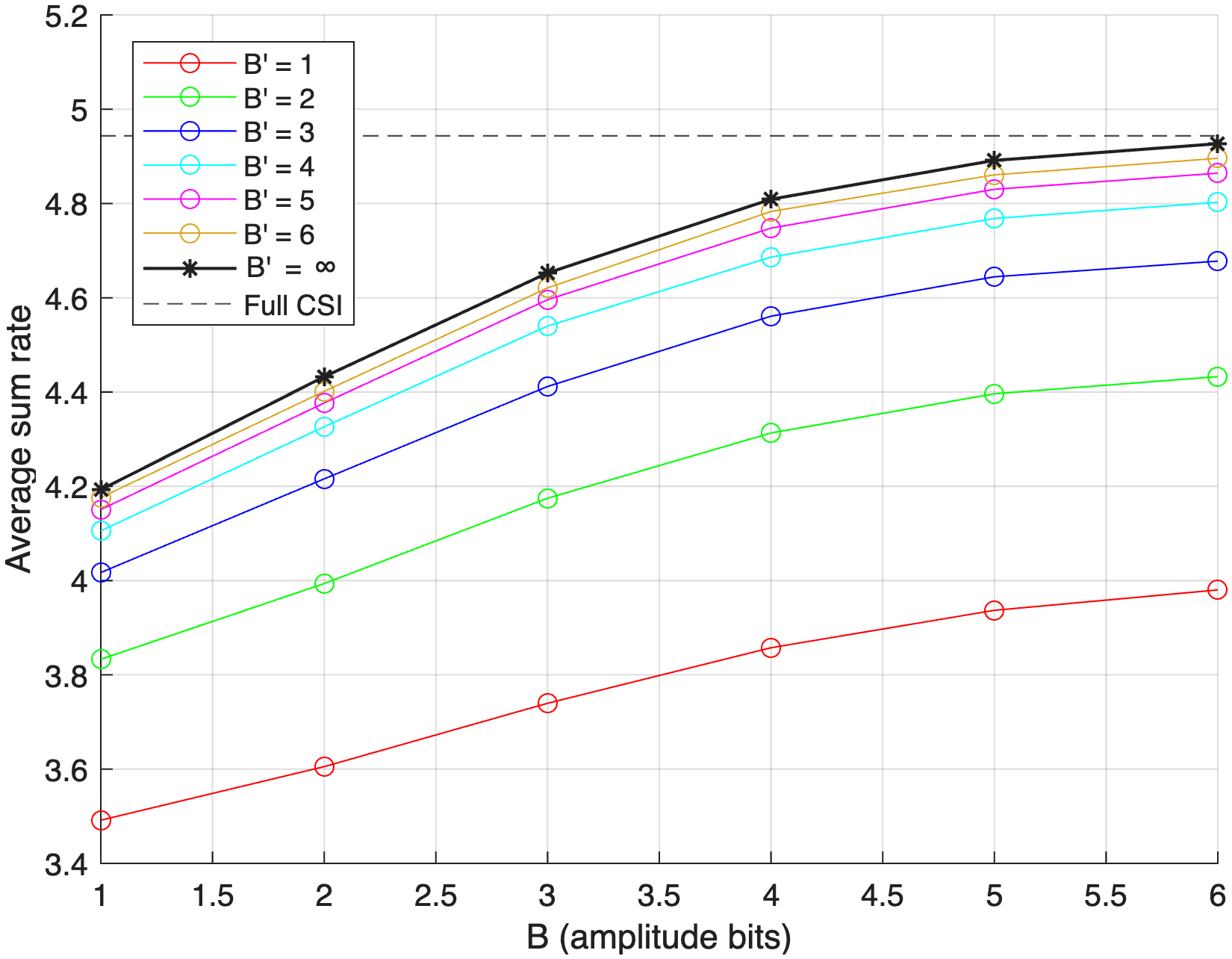}
        \label{fig:averRth}
    }
    \hspace{0.1\textwidth}
    \subfloat[]{
        \includegraphics[width=0.4\textwidth]{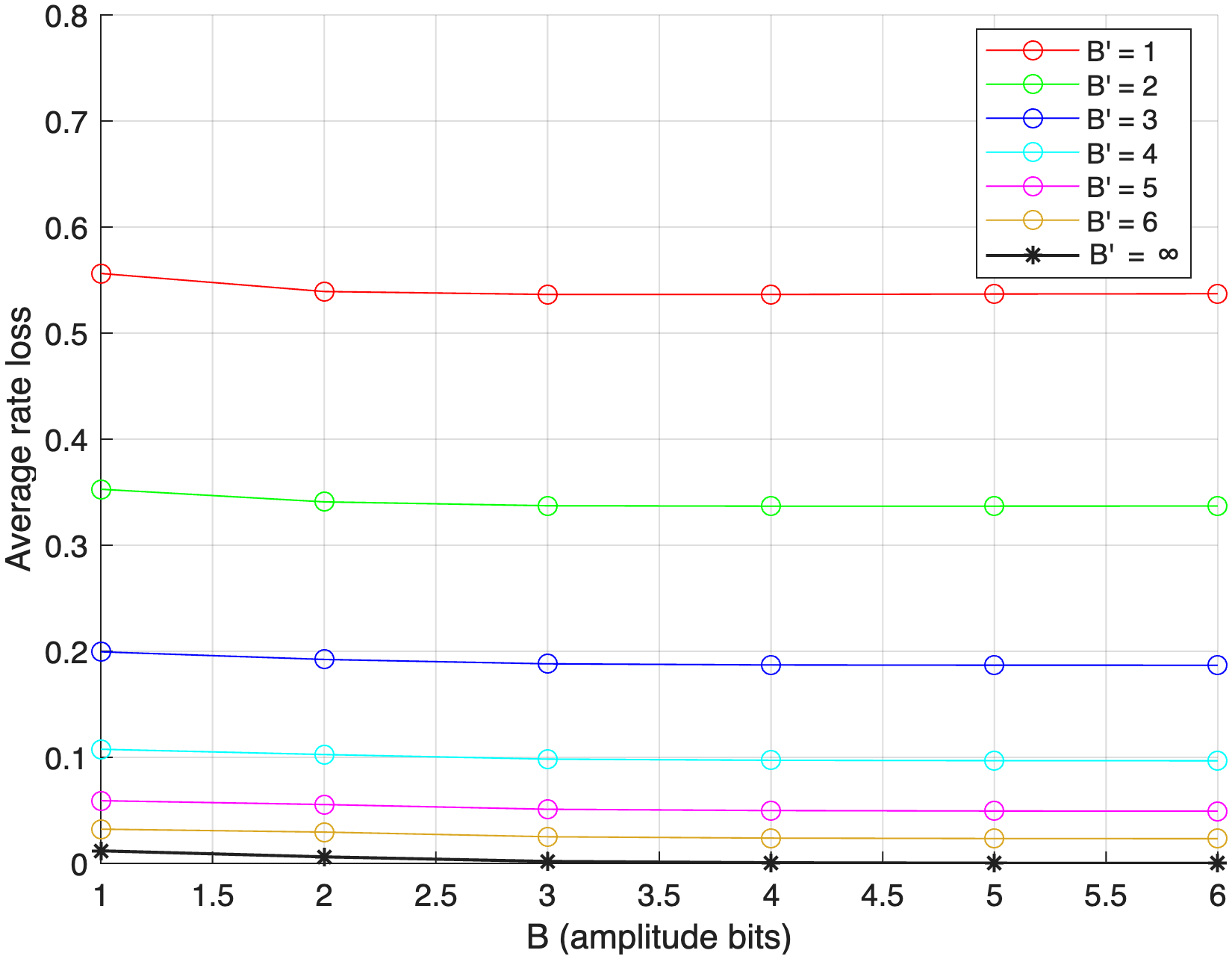}
        \label{fig:ratelossRth}
    }}
    \caption{Simulation results under \(R_{\rm th}=0.5\) bit/s/Hz with \(N_t=2\) and \(\mathrm{SNR}=10\) dB, while other parameters are kept the same as in the baseline setting: (a) average sum rate and (b) rate loss.}
    \label{fig:Rth}
\end{figure*}

\begin{figure*}[ht]
    \centering
    {\captionsetup[subfloat]{font=footnotesize}
    \subfloat[]{
        \includegraphics[width=0.4\textwidth]{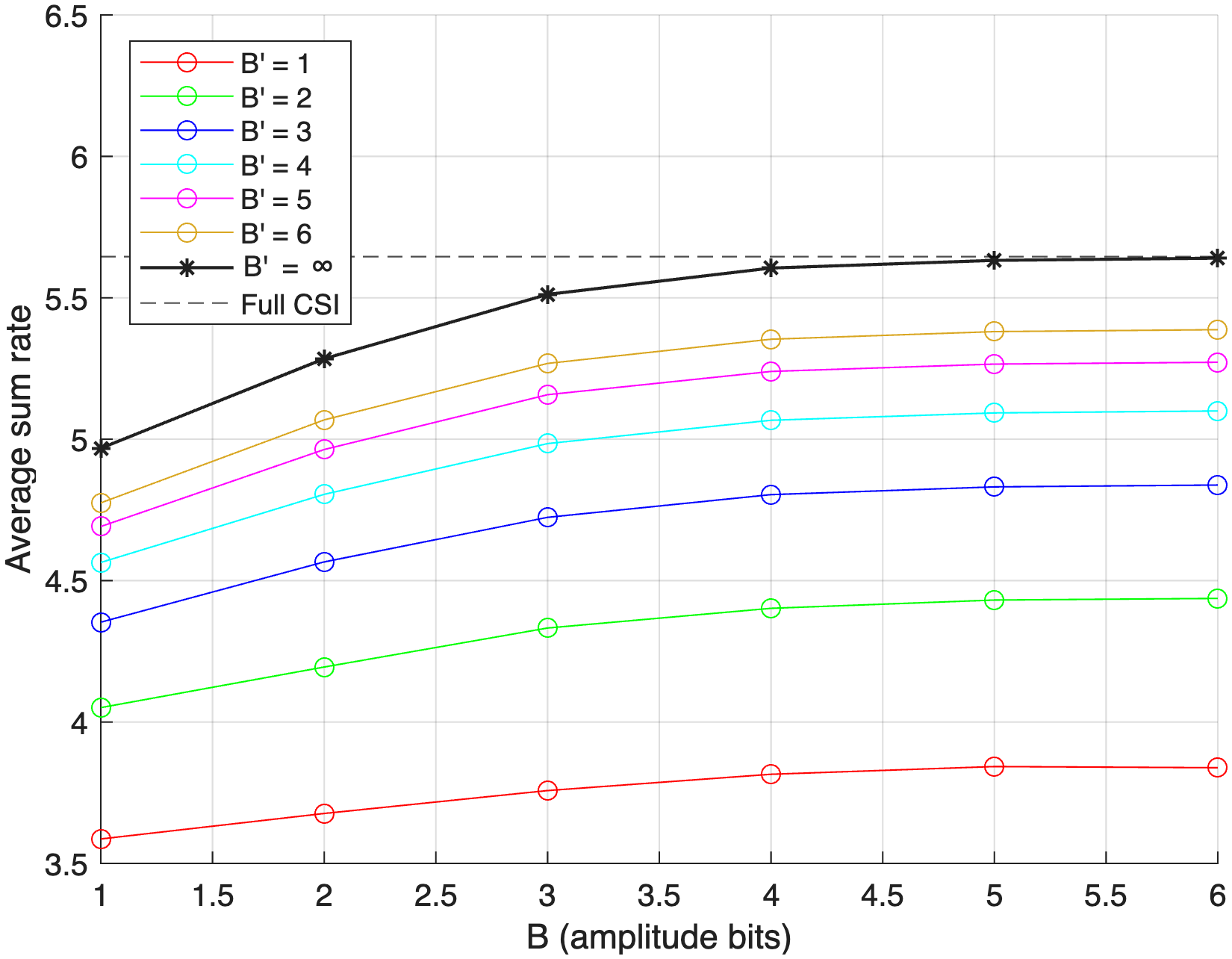}
        \label{fig:averNt3}
    }
    \hspace{0.1\textwidth}
    \subfloat[]{
        \includegraphics[width=0.4\textwidth]{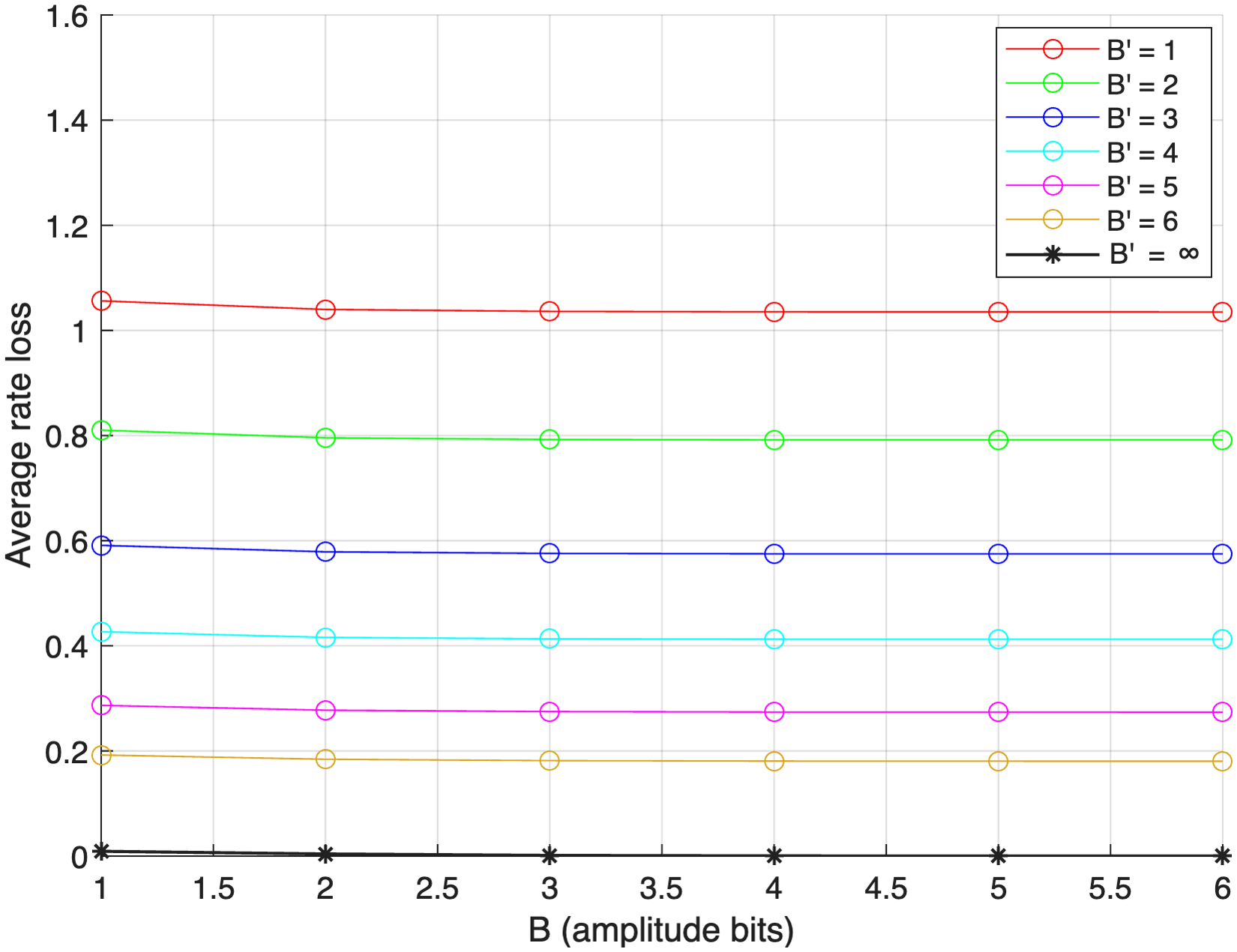}
        \label{fig:ratelossNt3}
    }}
    \caption{Simulation results with \(N_t=3\) transmit antennas, \(R_{\rm th}=1\) bit/s/Hz and \(\mathrm{SNR}=10\) dB, the amplitude quantizers are re-optimized to account for the change in the effective gain distribution: (a) average sum rate and (b) rate loss.}
    \label{fig:Nt3}
\end{figure*}

From Figs.~\ref{fig:15} and \ref{fig:Rth}, we observe that the average sum rate and rate loss show trends similar to the baseline setting as $B$ and $B'$ vary. For the case with $N_t=3$, the amplitude quantizers are re-optimized to account for the change in the effective channel gain distribution. Since the number of transmit antennas increases, a higher $B'$ is generally required to accurately quantize the channel directions. As shown in Fig.~\ref{fig:Nt3}, the performance gap between the case with $B'=6$ and the case with $B'=\infty$ is more noticeable compared to the baseline setting in Fig.~\ref{fig:5}.

\section{Conclusion}\label{conclusion}
This work presented a limited feedback framework design for MISO-NOMA, which is achieved by feeding back the precoding vector's index and the effective gain. We considered the effect of beamforming direction on the SIC order and power allocation. We analyzed NOMA's feasible region and rate loss in limited feedback, providing a rate loss upper bound, which converges to 0 at least exponentially. Simulation results show that the rate loss decreases as the number of feedback bits $B$ and $B'$ increases. A similar rate loss analysis for limited feedback MIMO-NOMA with more users is a possible future research direction. 

\appendices
\section{Proof of Theorem~\ref{thm:1}}\label{app:thm1}
We divide the rate loss into two parts.
\begin{equation}
    \begin{aligned}
    \Delta R&=(R_{\mathrm N}(H;\mathbf{h}/\|\mathbf{h}\|;\beta^*)
    -R_{\mathrm N}(H;\mathbf{h}/\|\mathbf{h}\|;\beta_q))\\
    &+(R_{\mathrm N}(H;\mathbf{h}/\|\mathbf{h}\|;\beta_q)
    -R_{\mathrm N}(H;\mathbf{w};\beta_q)).
    \end{aligned}
\label{eq:A1}
\end{equation}

The first item describes the rate loss caused by inaccurate power allocation and is decided by both $\eta_{ii}$ and $\delta$.

\begin{lemma}\label{le:3}
$\Delta R_1 \triangleq R_{\mathrm N}(H;\mathbf{h}/\|\mathbf{h}\|;\beta^*)
    -R_{\mathrm N}(H;\mathbf{h}/\|\mathbf{h}\|;\beta_q)$ is bounded by:
\begin{equation}
    \begin{aligned}
    \Delta R_1
    & \le \frac{P}{\ln2}\left(\frac{ H_{1}}{\sigma^2+\beta^* PH_{1}} +\frac{H^*_{21}}{\sigma^2} + \frac{H_{2} - H_{21}^*}{\sigma^2 +P H_{21}^*}\right) |\Delta \beta |. 
\end{aligned}
\label{eq:A2}
\end{equation}
\end{lemma}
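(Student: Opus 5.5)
The plan is to treat $\Delta R_1$ as the variation of a single scalar function of $\beta$ and control it with a Lipschitz (mean value) bound. Under ideal MRT ($\mathbf w_i=\mathbf h_i/\|\mathbf h_i\|$), the SINRs of Section~\ref{sec:model} become $\mathrm{SINR}_1=\beta P H_1/\sigma^2$ and $\mathrm{SINR}_2=(1-\beta)PH_2/(\beta P H_{21}^*+\sigma^2)$, where $H_{21}^*=|\mathbf h_2^H\mathbf h_1/\|\mathbf h_1\||^2\le H_2$. Write the sum rate as a sum of three logarithms, each monotone in $\beta$ with an explicit derivative:
\[
\ln 2\cdot R(\beta)=\ln(\sigma^2+\beta PH_1)+\ln\bigl(\sigma^2+PH_2-\beta P(H_2-H_{21}^*)\bigr)-\ln(\sigma^2+\beta PH_{21}^*)-\ln\sigma^2 .
\]
Then $\Delta R_1=R(\beta^*)-R(\beta_q)$, and it suffices to bound each of the three differences separately by (its derivative magnitude) $\times|\Delta\beta|$, using the triangle inequality.

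The second and third logarithms are handled by a crude supremum of the derivative over $\beta\in[0,1]$.

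\textbf{Interference term.} The derivative of $\ln(\sigma^2+\beta PH_{21}^*)$ is $PH_{21}^*/(\sigma^2+\beta PH_{21}^*)$. This is largest at $\beta=0$, giving $PH_{21}^*/\sigma^2$.

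\textbf{Numerator term.} The derivative of $\ln(\sigma^2+PH_2-\beta P(H_2-H_{21}^*))$ has magnitude $P(H_2-H_{21}^*)/(\sigma^2+PH_2-\beta P(H_2-H_{21}^*))$. The denominator is smallest at $\beta=1$, where it equals $\sigma^2+PH_{21}^*$, giving $P(H_2-H_{21}^*)/(\sigma^2+PH_{21}^*)$.

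Both bounds are valid for every $\beta$ between $\beta^*$ and $\beta_q$, since both lie in $[0,1]$. By the mean value theorem they produce exactly the second and third terms of \eqref{eq:A2}, after dividing by $\ln 2$.

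\textbf{Strong-user term.} The first term, $\ln(\sigma^2+\beta PH_1)$, is where the stated form with $\beta^*$ in the denominator must be justified, and I expect this to be the main obstacle. Its derivative $PH_1/(\sigma^2+\beta PH_1)$ is decreasing in $\beta$. If $\beta_q\ge\beta^*$, concavity gives $|f(\beta_q)-f(\beta^*)|\le f'(\beta^*)|\Delta\beta|$ immediately. If $\beta_q<\beta^*$, concavity only gives the derivative at $\beta_q$. The first thing I would try is to exploit the left-boundary quantizer, $\hat H_{ii}\le H_{ii}$, together with the structure of the optimum from Lemma~\ref{le:1}. The aim is to show that this term enters $\Delta R_1$ with a sign or magnitude controlled at $\beta^*$. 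For instance, when $\beta^*$ is interior (the stationary point $\beta_0$), the linear term of the total rate vanishes there, and the remaining error is second order in $\Delta\beta$.

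If this cannot be made rigorous, the fallback is to replace $\beta^*$ by $\beta_{1,\min}$ in that denominator. This is legitimate because both $\beta^*$ and $\beta_q$ exceed $\beta_{1,\min}$ in the doubly-feasible region, since $\hat\beta_{1,\min}\ge\beta_{1,\min}$. The fallback still yields a bound of the form $C_4(H_1+H_2)|\Delta\beta|$, which is all that Theorem~\ref{thm:1} requires. Summing the three pieces then gives \eqref{eq:A2}.
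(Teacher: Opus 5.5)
The obstacle you flag for $\beta_q<\beta^*$ is genuine, and it cannot be removed: in that case the displayed inequality is false. Your treatment of the two weak-user logarithms is the paper's Step~2 (mean value theorem, then the worst case of each derivative piece at $\beta=0$ and $\beta=1$). Your concavity argument for the strong-user term is also correct when $\beta_q\ge\beta^*$. The paper's Step~1 skips the bad case by using $|\log(1+x)|\le |x|/\ln 2$ with $x=\Delta\beta\,PH_1/(\sigma^2+\beta^*PH_1)$. That inequality holds only for $x\ge 0$; for $-1<x<0$ one has $|\ln(1+x)|>|x|$.

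In general, when $\beta_q<\beta^*$ you have $\Delta R_1\ge\log_2\frac{\sigma^2+\beta^*PH_1}{\sigma^2+\beta_qPH_1}-\log_2(1+PH_2/\sigma^2)$. The right-hand side of the lemma, by contrast, never exceeds $(1+PH_2/\sigma^2)/\ln 2$: use $|\Delta\beta|\le\beta^*$ in the first term and $|\Delta\beta|\le 1$ in the others. Concretely, take $P=\sigma^2=1$, $H_1=100$, $H_2=0.1$, $H_{21}^*=0.05$ (so $H_{12}=50$) and $\varepsilon=0.01$. The sum rate is increasing on $[0,1]$, so $\beta^*=\beta^*_{2,\max}=0.09/0.1005\approx 0.896$. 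For $\beta_q=0.01$, which respects $\beta_q\ge\beta^*_{1,\min}=10^{-4}$, one gets $\Delta R_1\approx 5.4$ bits against a right-hand side of about $1.5$ bits.

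This also means your first idea cannot succeed. The left-boundary quantizer does not rule out $\beta_q<\beta^*$. When $\beta^*=\beta_{\max}$ it tends to produce it, since $\hat H_{22}\le H_{22}\le H_2$ pushes $\hat\beta_{2,\max}$ down. Only $\hat\beta_{1,\min}\ge\beta^*_{1,\min}$ goes the favorable way, which is why the $\beta_{\min}$ case is fine. The cancellation at a stationary point $\beta_0$ is purely local and says nothing at a boundary optimum.

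Your fallback is the right repair. The mean-value point satisfies $\xi\ge\min\{\beta^*,\beta_q\}\ge\beta^*_{1,\min}$, so the strong-user term is at most $\frac{PH_1}{\ln 2\,\sigma^2(1+\varepsilon)}|\Delta\beta|$. The sharp form of your concavity argument simply replaces $\beta^*$ by $\min\{\beta^*,\beta_q\}$ in the first denominator. Either version is all Theorem~\ref{thm:1} needs, since its proof immediately weakens $\sigma^2+\beta^*PH_1$ to $\sigma^2$. Your closing sentence should therefore claim this corrected bound rather than the displayed one, because the lemma as written is not provable.
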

\noindent\textit{Proof}. The proof is provided in Appendix ~\ref{app:lemma3}. 

The second part captures the rate loss caused by limited beamforming direction choices.
\begin{lemma}\label{le:4}
$\Delta R_2 \triangleq R_{\mathrm N}(H;\mathbf{h}/\|\mathbf{h}\|;\beta_q)
    -R_{\mathrm N}(H;\mathbf{w};\beta_q)$ is bounded by:
\begin{equation}
    \begin{aligned}
        \Delta R_2
        \le &\frac{1}{\ln 2} \left[\frac{2(1-\beta_q)P^2 \beta_q \eta_{22}H_2^2 \sqrt{2 - 2\eta_{11}}}{\sigma^4} \right.\\
        & \left.+\frac{(1-\beta_q)PH_2(1-\eta_{22})}{\sigma^2} + \frac{\beta_q P H_1(1-\eta_{11})}{\sigma^2}\right].
    \end{aligned}
\label{eq:A3}
\end{equation}
\end{lemma}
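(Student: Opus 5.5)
The plan is to fix the power split at $\beta_q$ and pass from ideal beamformers $\mathbf u_k=\mathbf h_k/\|\mathbf h_k\|$ to the quantized ones $\mathbf w_k$ one quantity at a time, bounding each increment with an elementary logarithm inequality. At fixed $\beta=\beta_q$ we have $R_1=\log_2(1+\beta P H_{11}/\sigma^2)$ and $R_2=\log_2\bigl(1+\tfrac{(1-\beta)PH_{22}}{\beta P H_{21}+\sigma^2}\bigr)$. With ideal beamforming, $H_{11}=H_1$, $H_{22}=H_2$ and $H_{21}=|\mathbf h_2^H\mathbf u_1|^2$; with limited feedback, $H_{11}=\eta_{11}H_1$, $H_{22}=\eta_{22}H_2$ and $H_{21}=|\mathbf h_2^H\mathbf w_1|^2$. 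The one tool throughout is that, for $a,b\ge 0$,
\[
\log_2(1+a)-\log_2(1+b)\le \frac{a-b}{(1+b)\ln 2}\le \frac{|a-b|}{\ln 2}.
\]

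\textbf{Step 1 (strong user).} The inequality above gives
\[
R_1^{\text{ideal}}-R_1^{\text{LF}}\le \frac{\beta_q P H_1(1-\eta_{11})}{\sigma^2\ln 2},
\]
which is the third term of (\ref{eq:A3}).

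\textbf{Step 2 (weak user).} I would telescope through the intermediate configuration in which the signal gain is already $\eta_{22}H_2$ but the interference is still the ideal $|\mathbf h_2^H\mathbf u_1|^2$.
\begin{itemize}
\item \emph{Signal change.} Since the denominator $\beta PH_{21}+\sigma^2\ge\sigma^2$, the loss is at most $(1-\beta_q)PH_2(1-\eta_{22})/(\sigma^2\ln 2)$. This is the second term.
\item \emph{Interference change, with signal $S=(1-\beta_q)P\eta_{22}H_2$ fixed.} The map $I\mapsto \log_2(1+S/(I+\sigma^2))$ has derivative of magnitude at most $S/(\sigma^4\ln 2)$ for $I\ge 0$. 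Hence the loss is at most $S\,|\Delta I|/(\sigma^4\ln 2)$, where
\[
|\Delta I|=\beta_q P\,\bigl||\mathbf h_2^H\mathbf w_1|^2-|\mathbf h_2^H\mathbf u_1|^2\bigr|.
\]
\end{itemize}

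\textbf{Step 3 (bounding $|\Delta I|$).} This is the only non-routine step. Using $\bigl||x|^2-|y|^2\bigr|\le(|x|+|y|)\,|x-y|$ and Cauchy--Schwarz,
\[
|\Delta I|\le \beta_q P\cdot 2\|\mathbf h_2\|^2\,\|\mathbf w_1-e^{j\phi}\mathbf u_1\|.
\]
Here the phase $\phi$ is chosen freely, because $|\mathbf h_2^H\mathbf u_1|$ is phase-invariant. Choosing $\phi$ to align $\mathbf u_1$ with $\mathbf w_1$ gives
\[
\|\mathbf w_1-e^{j\phi}\mathbf u_1\|^2=2-2|\mathbf u_1^H\mathbf w_1|=2-2\sqrt{\eta_{11}}\le 2-2\eta_{11},
\]
since $\eta_{11}\le\sqrt{\eta_{11}}$. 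Substituting into the bound from Step 2 yields
\[
\frac{2(1-\beta_q)\beta_q P^2\eta_{22}H_2^2\sqrt{2-2\eta_{11}}}{\sigma^4\ln 2},
\]
which is the first term of (\ref{eq:A3}).

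Adding the three contributions gives (\ref{eq:A3}). One subtlety to handle carefully is the sign of the interference change. Quantization may make the interference smaller, in which case that piece is negative. Since we only need an upper bound, using the absolute value is harmless.
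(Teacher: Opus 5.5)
Your proof is correct and follows the paper's route: the same strong/weak-user split, the same intermediate term (signal gain $\eta_{22}H_2$ with ideal interference $H_{21}^*$) separating the $(1-\eta_{22})$ loss from the interference mismatch, and the same factorization-plus-Cauchy--Schwarz bound $|H_{21}^*-H_{21}|\le 2H_2\sqrt{2-2\eta_{11}}$ (the paper's Lemma~\ref{lemma:S1}). Your explicit choice of the phase $\phi$ is a small improvement, since the paper's identity $\|\mathbf w_1-\mathbf h_1/\|\mathbf h_1\|\|^2=2-2\sqrt{\eta_{11}}$ tacitly assumes that $\mathbf w_1^H\mathbf h_1$ is real and positive.
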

\noindent\textit{Proof}. The proof is provided in Appendix~\ref{app:lemma4}.

Combining Eq.~\eqref{eq:A2} and Eq.~\eqref{eq:A3}, and since $0<\beta_q<1$, $0<\beta_q(1-\beta_q)<\frac{1}{4}$, $H_1, H_{21}^*>0$, we have
\begin{equation}
\begin{aligned}
   \Delta R
    & \le \frac{P}{\ln 2} \left[\frac{H_1(1-\eta_{11})}{\sigma^2} +  \frac{P \eta_{22}H_2^2 \sqrt{2 - 2\eta_{11}}}{2\sigma^4} \right.\\  
        & \quad\left. + \frac{H_2(1-\eta_{22})}{\sigma^2}+(\frac{ H_{1}}{\sigma^2} +\frac{H^*_{21}}{\sigma^2} + \frac{H_{2} - H_{21}^*}{\sigma^2 }) |\Delta \beta | \right]\\
    & \triangleq C_1H_1(1-\eta_{11})+C_2H_2^2\sqrt{2-2\eta_{11}}+C_3H_2(1-\eta_{22}) \\
    &  \qquad+C_4(H_1+H_2)|\Delta \beta|.
\end{aligned}
\end{equation}
This completes the proof of Theorem~\ref{thm:1}.

\section{Proof of Lemma~\ref{le:1}}\label{app:lemma1}
In the MISO system, we have the sum rate:
\begin{equation}
   \begin{aligned}
    R_{sum} &=\log\bigl(1+\frac{\beta |\mathbf{h}_1^H\mathbf{w}_1|^2}{\sigma^2}\bigr) + \log\bigl(1+ \frac{(1-\beta) |\mathbf{h}_2^H\mathbf{w}_2|^2}{\beta|\mathbf{h}_2^H\mathbf{w}_1|^2+\sigma^2}\bigr).
\end{aligned} 
\end{equation}
To find the optimal power allocation, we calculate the partial derivative, 
\begin{equation}
    \begin{aligned}
        \frac{\partial R_{sum}}{\partial \beta} =& \frac{1}{\ln 2}\left[\frac{\sigma^2(|\mathbf{h}_1^H\mathbf{w}_1|^2-|\mathbf{h}_2^H\mathbf{w}_1|^2)}{(\beta |\mathbf{h}_1^H\mathbf{w}_1|^2+\sigma^2)((\beta |\mathbf{h}_2^H\mathbf{w}_1|^2+\sigma^2)}\right.\\
        &-\left.\frac{|\mathbf{h}_2^H\mathbf{w}_2|^2-|\mathbf{h}_2^H\mathbf{w}_1|^2}{\beta |\mathbf{h}_2^H\mathbf{w}_1|^2+\sigma^2+ (1-\beta)|\mathbf{h}_2^H\mathbf{w}_2|^2}\right].
    \end{aligned}
\end{equation}
Since $\mathbf{w}_2$ is selected using MRT, $\mathbf{w}_2=\arg\max_{\mathbf{w}\in C}|\mathbf{h}_2^H\mathbf{w}|$. We have $|\mathbf{h}_1^H\mathbf{w}_1|> |\mathbf{h}_2^H\mathbf{w}_2|> |\mathbf{h}_2^H\mathbf{w}_1|$. 
The stationary point holds when
\begin{align}
    \frac{\sigma^2(H_{11}-H_{21})}{(\beta H_{11}+\sigma^2)(\beta H_{21}+\sigma^2)}= \frac{H_{22}-H_{21}}{\beta H_{21}+\sigma^2+ (1-\beta)H_{22}}.
\end{align}
This is a quadratic equation in a single variable and has at most one root $\beta_0$ in the interval (0,1). So, the optimal power allocation must be one of the two boundary points or the stationary point $\beta_0$ if it lies in the feasible interval.

\section{Proof of Lemma~\ref{le:2}}\label{app:lemma2}
\begin{lemma}
    $\Delta \beta_{max}$, $\Delta \beta_{min}$ and $\Delta \beta_0$ all converge to 0 at least exponentially when $B$ and $B'$ increase.
\label{lemma:5}
\end{lemma}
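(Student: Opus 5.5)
We will treat each of the three candidate points of Lemma~\ref{le:1} as a function of the channel quantities and show that its full-CSI and limited-feedback versions differ by an amount controlled by the quantization errors. Specifically, $\beta_{\min}$, $\beta_{\max}$ and $\beta_0$ are evaluated in one case at $(H_1, H_2, H_{21}^*, H_{12}^*)$ and in the other at the estimates $(\hat H_{11},\hat H_{22},\hat H_{21},\hat H_{12})$. The plan is a perturbation (Lipschitz) argument: bound the input errors in terms of $\delta$ and $1-\eta_{ii}$, and then propagate them through the smooth maps defining each $\beta$.

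\textbf{Step 1 (input errors).} By the quantizer construction, $0\le H_{ii}-\hat H_{ii}<\delta$ in Case (i), and $H_i-H_{ii}=H_i(1-\eta_{ii})$. Together these give $|H_i-\hat H_{ii}|\le \delta+H_i(1-\eta_{ii})$.

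For the cross terms we compare $\hat H_{21}=\hat H_{22}|\mathbf w_2^H\mathbf w_1|^2$ with $H_{21}^*=H_2|\bar{\mathbf h}_2^H\bar{\mathbf h}_1|^2$, where $\bar{\mathbf h}_i=\mathbf h_i/\|\mathbf h_i\|$. Up to a phase, $\|\mathbf w_i-\bar{\mathbf h}_i\|=\sqrt{2-2\sqrt{\eta_{ii}}}\le\sqrt{2-2\eta_{ii}}$. Since $|a|^2-|b|^2$ is Lipschitz on the unit ball, this yields
\[
\big||\mathbf w_1^H\mathbf w_2|^2-|\bar{\mathbf h}_1^H\bar{\mathbf h}_2|^2\big|\le 2\big(\sqrt{2-2\eta_{11}}+\sqrt{2-2\eta_{22}}\big).
\]
The same argument handles $\hat H_{12}$ versus $H_{12}^*$. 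Every input error is therefore at most $c\,(\delta + (H_1+H_2)\,\epsilon_{B'})$, where $\epsilon_{B'}=\sum_i\sqrt{1-\eta_{ii}}$. Here $\delta\propto 2^{-B}$, and for RVQ with Rayleigh fading $\mathbb E\sqrt{1-\eta_{ii}}\le\sqrt{\mathbb E(1-\eta_{ii})}\approx 2^{-B'/2(N_t-1)}$, both of which decay exponentially.

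\textbf{Step 2 (propagation).} $\beta_{\min}=\varepsilon\sigma^2/(PH)$ has derivative $\varepsilon\sigma^2/(PH^2)$. On the feasible region, where the sufficient conditions of Section~\ref{sec:region} keep $P\hat H_{11}-\varepsilon\sigma^2$ bounded away from $0$, this derivative is bounded, so $|\Delta\beta_{\min}|$ is at most a constant times the input error.

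$\beta_{2,\max}$ and $\beta_{\mathrm{SIC},\max}$ are ratios $(PX-\varepsilon\sigma^2)/(PX+\varepsilon PY)$ with denominators at least $PX>0$. Their gradients are bounded by $O(1/X)$, which again gives Lipschitz control. Because $\min$ is $1$-Lipschitz, the bound passes to $\Delta\beta_{\max}$.

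\textbf{Step 3 (the stationary point, the main obstacle).} $\beta_0$ is a root of the quadratic in Lemma~\ref{le:1}, whose coefficients are polynomials in the $H$'s. A root moves Lipschitz-continuously with the coefficients only when the root is simple, that is, when the derivative of the quadratic at $\beta_0$ is bounded away from zero; near a double root the root may move like the square root of the perturbation.

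We will handle this in one of two ways. The first option is to note that the square-root behaviour still preserves exponential decay, only with the rate halved, since the square root of an exponentially small quantity is exponentially small. The second is to observe that a double root of the quadratic occurs on a measure-zero set of Rayleigh channels, so the contribution in expectation can be controlled by splitting on the event that the discriminant is at least $t$ and choosing $t$ to decay exponentially in $B,B'$.

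We also have to account for the possibility that the two systems select different candidates among $\{\beta_{\min},\beta_{\max},\beta_0\}$. We will argue that a switch can only happen when two candidates nearly coincide, because the optimal point is a continuous function of the parameters: it is the projection of $\beta_0$ onto $[\beta_{\min},\beta_{\max}]$ when the rate is unimodal. Hence $|\Delta\beta|\le\max\{|\Delta\beta_{\min}|,|\Delta\beta_{\max}|,|\Delta\beta_0|\}$.

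Finally, since $|\Delta\beta|\le1$, we can truncate the heavy-tailed factors $H_i$ on events of exponentially small probability. Taking expectations then gives the exponential decay claimed in Lemma~\ref{lemma:5}.
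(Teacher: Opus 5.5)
Your Steps 1--2 follow essentially the paper's route: you perturb the closed-form thresholds using $|H_i-\hat H_{ii}|\le\delta+H_i(1-\eta_{ii})$ and the $\sqrt{2-2\eta_{ii}}$ bounds on the cross gains. Relying on the feasibility bounds $PH_1, P\hat H_{11}\ge\varepsilon\sigma^2$ is in fact slightly cleaner than the paper's $\mathbb{E}[(1-\eta_{11})/\eta_{11}]$ computation, and your clamp/projection argument for candidate switching is sound.

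The gap is Step 3, where the obstacle is misdiagnosed: this quadratic never has a (near-)double root. Indeed $a^*=H_1H_{21}^*(H_2-H_{21}^*)\ge 0$, $b^*=2\sigma^2H_1(H_2-H_{21}^*)\ge 0$ and $c^*<0$ (because $H_1>H_2\ge H_{21}^*$). So $f^*$ is increasing on $[0,\infty)$, has at most one positive root, and its discriminant is at least $b^{*2}$. The paper exploits exactly this sign structure. By the mean value theorem, $|\beta_{0,q}-\beta_0^*|=|f^*(\beta_{0,q})-f_q(\beta_{0,q})|/f^{*\prime}(\xi)$ with $f^{*\prime}(\xi)=2a^*\xi+b^*\ge b^*$, whence $|\Delta\beta_0|\le(|\Delta a|+|\Delta b|+|\Delta c|)/b^*$. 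The genuine degeneracy is $b^*=2\sigma^2H_1H_2\sin^2\theta\to 0$, i.e., nearly aligned channel directions. The paper controls it through $\mathbb{E}[1/\sin^2\theta]=(N_t-1)/(N_t-2)$ for $\sin^2\theta\sim\mathrm{Beta}(N_t-1,1)$, which is exactly where the hypothesis $N_t>2$ of Lemma~\ref{le:2} and Theorem~\ref{thm:2} enters. Your proposal never uses $N_t>2$, which is a symptom that this step is missing.

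Neither of your fallbacks closes the gap. Option (a) is only pointwise. The prefactor in any root-perturbation estimate for this quadratic is not uniform: the Lipschitz constant is $1/f^{*\prime}$, of order $1/b^*$, and it blows up as the channels align. So ``the square root of an exponentially small quantity is exponentially small'' gives nothing in expectation unless that prefactor is controlled. Option (b) rests on the degenerate set having measure zero, but measure zero yields no rate; you need a quantitative small-ball bound on the conditioning quantity. Truncating large $H_i$ targets the wrong variable, since polynomial moments of $H_i$ are finite under Rayleigh fading.

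Within your framework the repair is to truncate on $\{\sin^2\theta\ge t\}$. On that event the bound above gives $|\Delta\beta_0|\le(|\Delta a|+|\Delta b|+|\Delta c|)/(2\sigma^2H_1H_2t)$, whose expectation is $O(\gamma/t)$ with $\gamma=2^{-B}+2^{-B'/(2(N_t-1))}$, by the same term-by-term estimates as in the paper. On the complement you use $|\Delta\beta|\le 1$ together with $\Pr(\sin^2\theta<t)=t^{N_t-1}$. Taking $t=\sqrt{\gamma}$ gives exponential decay with a reduced exponent, without needing $\mathbb{E}[1/\sin^2\theta]<\infty$. As written, though, the $\Delta\beta_0$ part of the lemma is not established.
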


The proof of Lemma~\ref{lemma:5} is provided in Appendix~\ref{app:lemma5}. 
We now proceed to prove Lemma~\ref{le:2}. Switching from full CSI to limited feedback, there is a chance that the choice among $\{\beta_{\min},\beta_0,\beta_{\max}\}$ changes. For example, when $\beta^*= \beta^*_{\min}$, it is possible that the best power allocation $\beta_q$ to maximize the sum rate is $\beta_{0,q}$ not $\beta_{\min,q}$. However, since $R^{LF}(H;\mathbf{w};\beta_{0,q})>R^{LF}(H;\mathbf{w};\beta_{\min,q})$, we have $R^{Full}(H;\mathbf{h}/\|\mathbf{h}\|;\beta_{\min})-R^{LF}(H;\mathbf{w};\beta_{\min,q})>R^{Full}(H;\mathbf{h}/\|\mathbf{h}\|;\beta_{\min})-R^{LF}(H;\mathbf{w};\beta_{0,q})$. Calculating the rate loss assuming that the choice among $\{\beta_{\min},\beta_0,\beta_{\max}\}$ in both full CSI and limited feedback is the same gives an upper bound for the real rate loss. The above reasoning together with Lemma~\ref{lemma:5} completes the proof.

\section{Proof of Lemma~\ref{le:3}}\label{app:lemma3}

We can write $\Delta R_1$ as the sum of the rate losses for the strong and weak users:
\begin{equation}
\begin{aligned}
    &\Delta R_1 = R_{\mathrm N}(H;\mathbf{h}/\|\mathbf{h}\|;\beta^*) -R_{\mathrm N}(H;\mathbf{h}/\|\mathbf{h}\|;\beta_q)\\
    & = \log\left(\frac{\sigma^2+\beta^*PH_{1}}{\sigma^2+\beta_qPH_{1}}\right)+\log\left(\frac{1+\frac{(1-\beta^*)PH_{2}}{\beta^*PH^*_{21}+\sigma^2}}{1+\frac{(1-\beta_q)PH_{2}}{\beta_q PH_{21}^*+\sigma^2}}\right)\\
    & \triangleq \Delta R_{1,s} +\Delta R_{1,w}
\end{aligned}
\end{equation}

\noindent\textbf{Step 1:} Rate loss of the strong user ($\Delta R_{1,s}$):
we express this term as a function of $\Delta \beta$
\begin{equation}
    \begin{aligned}
        |\Delta R_{1,s}| &= \left|\log\left(\frac{\sigma^2+\beta_q PH_{1}}{\sigma^2+\beta^* P H_{1}}\right)\right|\\
        & = \left|\log\left(1+\frac{\Delta \beta P H_{1}}{\sigma^2+\beta^* PH_{1}}\right)\right|\\
        & \le \frac{1}{\ln 2}\frac{ P H_{1}}{\sigma^2+\beta^* PH_{1}} \left|\Delta \beta\right|
    \end{aligned}
\end{equation}

\noindent\textbf{Step 2:} Rate loss of the weak user ($\Delta R_{1,w}$): we write the rate of the weak user as a function of power allocation $\beta$:
\begin{equation}
    R_{1,w}(\beta) = \log(\beta P(H_{21}^*-H_{2})+\sigma^2+P H_{2})-\log(\beta P H_{21}^*+\sigma^2)
\end{equation}
From intermediate value theorem, we know that there exists a $\xi$ between  $\beta^*$ and $\beta_q$, such that
\begin{equation}
    |\Delta R_{1,w}(\beta)| = |R_{1,w}(\beta^*)-R_{1,w}(\beta_q)|= | R'_{1,w}(\xi)\Delta \beta|
\label{eq:17}
\end{equation}
To get the full expression, we calculate the first-order derivative
\begin{equation}
\begin{aligned}
    R'_{1,w}(\beta)&= \frac{1}{\ln2}\left(\frac{P(H_{21}^*-H_{2})}{\beta P (H_{21}^*-H_{2})+\sigma^2+PH_{2}}\right.\\
    &\quad\left.-\frac{PH_{21}^*}{\beta P H_{21}^*+\sigma^2}\right).
\end{aligned}
\label{eq:18}
\end{equation}
Also, since $H_2\ge H^*_{21}\ge 0$ and $0<\xi<1$, we have $\xi P H_{21}^*+\sigma^2\ge \sigma^2>0$, $\xi P (H_{21}^*-H_{2})+\sigma^2+PH_{2}\ge P (H_{21}^*-H_{2})+\sigma^2+PH_{2}=P H_{21}^*+\sigma^2>0$. Using these inequalities and substituting Eq.~\eqref{eq:18} into Eq.~\eqref{eq:17}, we have an upper bound of $\Delta R_{1,w}$ as a function of $\Delta\beta$  as follows.

\begin{equation}
\begin{aligned}
    |\Delta R_{1,w}(\beta)|
    &= | R'_{1,w}(\xi)\Delta \beta|\\
    & = \frac{P}{\ln2}\left(\frac{H^*_{21}}{\xi P H_{21}^*+\sigma^2} \right.\\
    &\quad\left. + \frac{H_{2} - H_{21}^*}{\xi P (H_{21}^*-H_{2})+\sigma^2+PH_{2}}\right) |\Delta \beta |\\
    & \le \frac{P}{\ln2}\left(\frac{H^*_{21}}{\sigma^2} + \frac{H_{2} - H_{21}^*}{\sigma^2 +P H_{21}^*}\right) |\Delta \beta |. 
\end{aligned}
\label{eq:19}
\end{equation}
As a result, we have the rate loss bound as a function of $\Delta \beta$:
\begin{equation}
    \begin{aligned}
        \Delta R_1 &= \Delta R_{1,s} + \Delta R_{1,w}\\
        & \le |\Delta R_{1,s}| + |\Delta R_{1,w}|\\
        & \le \frac{P}{\ln2}\left(\frac{ H_{1}}{\sigma^2+\beta^* PH_{1}} +\frac{H^*_{21}}{\sigma^2} + \frac{H_{2} - H_{21}^*}{\sigma^2 +P H_{21}^*}\right) |\Delta \beta | 
    \end{aligned}
    \label{eq:2nd}
\end{equation}
This completes the proof of Lemma~\ref{le:3}.

\section{Proof of Lemma~\ref{le:4}}\label{app:lemma4}

The upper bound of $\Delta R_2$ can be decomposed into the rate loss of the strong user and that of the weak user.
\begin{equation}
    \begin{aligned}
    &R_{\mathrm N}(H;\mathbf{h}/\|\mathbf{h}\|;\beta)
    -R_{\mathrm N}(H;\mathbf{w};\beta) \\
    &=\left[\log\left(1+\frac{\beta P H_1}{\sigma^2}\right)-\log\left(1+\frac{\beta P \eta_{11}H_1}{\sigma^2}\right)\right]\\
    &\quad + \left[\log\left(1+\frac{(1-\beta)PH_2}{\beta P\left|\mathbf{h}_2^H \frac{\mathbf{h}_1}{\|\mathbf{h}_1\|}\right|^2+\sigma^2}\right)\right.\\
    &\quad\left. - \log\left(1+\frac{(1-\beta)P\eta_{22}H_2}{\beta P H_{21}+\sigma^2}\right)\right]\\
    &\triangleq \Delta R_{2,s}+\Delta R_{2,w}
\end{aligned}
\end{equation}

\noindent\textbf{Step 1:} Rate loss of the strong user
\begin{equation}
    \begin{aligned}
        \Delta R_{2,s} &= \log\left(1+\frac{\beta P H_1}{\sigma^2}\right)-\log\left(1+\frac{\beta P \eta_{11}H_1}{\sigma^2}\right)\\
        &=\log\left(\frac{\sigma^2+\beta P H_1}{\sigma^2 + \beta P\eta_{11}H_1}\right)\\
        &=\log\left(1+\frac{\beta P H_1(1-\eta_{11})}{\sigma^2+\beta P\eta_{11}H_1}\right)\\
        &\le\frac{1}{\ln 2} \frac{\beta P H_1(1-\eta_{11})}{\sigma^2+\beta P\eta_{11}H_1}
    \end{aligned}
    \label{eq:s}
\end{equation}

\noindent\textbf{Step 2:} Rate loss of the weak user
\begin{equation}
\begin{aligned}
        \Delta R_{2,w} &= \log\left(1+\frac{(1-\beta)PH_2}{\beta P\left|\mathbf{h}_2^H \frac{\mathbf{h}_1}{\|\mathbf{h}_1\|}\right|^2+\sigma^2}\right)\\
    &\quad - \log\left(1+\frac{(1-\beta)P\eta_{22}H_2}{\beta P H_{21}+\sigma^2}\right)
\end{aligned}
\end{equation}
Let $H_{21}^*=\left|\mathbf{h}_2^H \frac{\mathbf{h}_1}{\|\mathbf{h}_1\|}\right|^2$, applying the Lagrange mean value theorem on $\log(1+x)$, we have
\begin{equation}
\begin{aligned}
        \Delta R_{2,w} & \le \frac{1}{\ln2 \left(1+\frac{(1-\beta)P\eta_{22}H_2}{\beta PH_{21}^*+\sigma^2}\right)} \\
        &\quad\cdot\left|\frac{(1-\beta)PH_2}{\beta PH_{21}^*+\sigma^2}- \frac{(1-\beta)P\eta_{22}H_2}{\beta P H_{21}+\sigma^2}\right|
\end{aligned}
\end{equation}
We can write the right hand side as a function of $\left|H_{21}^*-H_{21}\right|$ and $\eta_{22}$,
\begin{equation}
\begin{aligned}
    &\left|\frac{(1-\beta)PH_2}{\beta PH_{21}^*+\sigma^2}- \frac{(1-\beta)P\eta_{22}H_2}{\beta P H_{21}+\sigma^2}\right|\\
    & =\left|\frac{(1-\beta)PH_2(1-\eta_{22})}{\beta PH_{21}^*+\sigma^2}+ \frac{(1-\beta)PH_2\eta_{22}}{\beta PH_{21}^*+\sigma^2} \right.\\
    &\quad\left. - \frac{(1-\beta)P\eta_{22}H_2}{\beta P H_{21}+\sigma^2}\right|\\
    &= (1-\beta)P\eta_{22}H_2 \frac{\beta P \left|H_{21}-H_{21}^*\right|}{(\beta PH_{21}^*+\sigma^2)(\beta P H_{21}+\sigma^2)} \\
    &\quad + \frac{(1-\beta)PH_2(1-\eta_{22})}{\beta PH_{21}^*+\sigma^2}
\end{aligned}  
\label{eq:w}
\end{equation}

\begin{lemma}
$\left|H_{21}^*-H_{21}\right| $ is determined by $H_2$ and $\eta_{11}$:
    \begin{equation}
    \begin{aligned}
        \left|H_{21}^*-H_{21}\right| 
        & \le 2 H_2 \sqrt{2 - 2\eta_{11}}
    \end{aligned}
\end{equation}
\label{lemma:S1}
\end{lemma}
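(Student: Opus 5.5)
The plan is to write both interference gains as squared moduli of inner products of $\mathbf{h}_2$ with unit vectors. Set $\mathbf{u}=\mathbf{h}_1/\|\mathbf{h}_1\|$. Then
\[
H_{21}^*=|\mathbf{h}_2^H\mathbf{u}|^2, \qquad H_{21}=|\mathbf{h}_2^H\mathbf{w}_1|^2 .
\]
The difference of squared moduli factors as
\[
\bigl||a|^2-|b|^2\bigr|=\bigl||a|-|b|\bigr|\,\bigl(|a|+|b|\bigr).
\]
We take $a=\mathbf{h}_2^H\mathbf{u}$ and $b=\mathbf{h}_2^H\mathbf{w}_1$. By Cauchy--Schwarz, and since $\mathbf{u}$ and $\mathbf{w}_1$ are unit-norm, each of $|a|$ and $|b|$ is at most $\|\mathbf{h}_2\|$. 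Hence the sum factor is at most $2\|\mathbf{h}_2\|$.

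The key step is the difference factor. The quantity $|\mathbf{h}_2^H\mathbf{w}_1|$ is invariant under a phase rotation of $\mathbf{w}_1$, so we may replace $\mathbf{w}_1$ by $e^{j\phi}\mathbf{w}_1$. We choose $\phi$ so that $\mathbf{u}^H e^{j\phi}\mathbf{w}_1$ is real and nonnegative, i.e. equal to $|\mathbf{u}^H\mathbf{w}_1|$. Note that $|\mathbf{u}^H\mathbf{w}_1|^2=|\mathbf{h}_1^H\mathbf{w}_1|^2/\|\mathbf{h}_1\|^2=\eta_{11}$. The reverse triangle inequality and Cauchy--Schwarz then give
\[
\bigl||a|-|b|\bigr|\le|\mathbf{h}_2^H(\mathbf{u}-e^{j\phi}\mathbf{w}_1)|\le\|\mathbf{h}_2\|\,\|\mathbf{u}-e^{j\phi}\mathbf{w}_1\|.
\]
Expanding the norm gives
\[
\|\mathbf{u}-e^{j\phi}\mathbf{w}_1\|^2=2-2\,\mathrm{Re}(\mathbf{u}^He^{j\phi}\mathbf{w}_1)=2-2\sqrt{\eta_{11}}.
\]
Combining the two factors, we get
\[
|H_{21}^*-H_{21}|\le 2\|\mathbf{h}_2\|^2\sqrt{2-2\sqrt{\eta_{11}}}.
\]

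To reach the stated form, we use $\|\mathbf{h}_2\|^2=H_2$ together with $\sqrt{\eta_{11}}\ge\eta_{11}$, which holds because $0\le\eta_{11}\le1$. This gives $2-2\sqrt{\eta_{11}}\le 2-2\eta_{11}$, and hence
\[
|H_{21}^*-H_{21}|\le 2H_2\sqrt{2-2\eta_{11}}.
\]

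The only delicate point is the phase alignment. Without it, the chordal distance $\|\mathbf{u}-\mathbf{w}_1\|$ is not controlled by $\eta_{11}$, because the codebook vector is determined only up to a phase. The phase invariance of $H_{21}$ is exactly what allows this choice, and after it the remaining steps are routine inequalities.
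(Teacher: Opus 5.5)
Your proof is correct and follows essentially the same route as the paper's. You factor the difference of squared moduli, bound the sum factor by $2\|\mathbf{h}_2\|$ and the difference factor by $\|\mathbf{h}_2\|\,\|\mathbf{u}-\mathbf{w}_1\|$ via Cauchy--Schwarz, evaluate that distance as $\sqrt{2-2\sqrt{\eta_{11}}}$, and relax using $\sqrt{\eta_{11}}\ge\eta_{11}$. Two of your details are more careful than the paper's: you factor through the real magnitudes, whereas the paper writes $\bigl||a|^2-|b|^2\bigr|=|(a+b)(a-b)|$, which for complex $a,b$ holds only as an inequality (via $|a|^2-|b|^2=\mathrm{Re}[(a+b)\overline{(a-b)}]$); and you make explicit the phase alignment of $\mathbf{w}_1$ that the paper uses silently when it sets $\mathcal{R}\{\mathbf{w}_1^H\mathbf{u}\}=\sqrt{\eta_{11}}$.
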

\noindent\textit{Proof.} See Appendix~\ref{app:S1}.

Applying Lemma~\ref{lemma:S1} to Eq.~\eqref{eq:w} and adding Eq.~\eqref{eq:s}, we have
\begin{equation}
    \begin{aligned}
    & R_{\mathrm N}(H;\mathbf{h}/\|\mathbf{h}\|;\beta)
    -R_{\mathrm N}(H;\mathbf{w};\beta) \\
    & \le 
    \frac{1}{\ln2 \left(1+\frac{(1-\beta)P\eta_{22}H_2}{\beta PH_{21}^*+\sigma^2}\right)} \left|\frac{(1-\beta)PH_2(1-\eta_{22})}{\beta PH_{21}^*+\sigma^2} \right.\\
    &\quad \left.+ (1-\beta)P\eta_{22}H_2 \frac{2\beta P H_2 \sqrt{2 - 2\eta_{11}}}{(\beta PH_{21}^*+\sigma^2)(\beta P H_{21}+\sigma^2)}\right|\\
    & \quad + \frac{1}{\ln 2} \frac{\beta P H_1(1-\eta_{11})}{\sigma^2+\beta P\eta_{11}H_1}
\end{aligned}
\label{eq:16}
\end{equation}
Eq.~\eqref{eq:16} provides an upper bound for the rate loss term based on the beamforming direction, and it approaches 0 as $\eta_{11}$ and $\eta_{22}$ approach 1.

Let $\beta=\beta_q$, we have
\begin{equation}
    \begin{aligned}
    & R_{\mathrm N}(H;\mathbf{h}/\|\mathbf{h}\|;\beta_q)
    -R_{\mathrm N}(H;\mathbf{w};\beta_q) \\
    & \le 
    \frac{1}{\ln2 \left(1+\frac{(1-\beta_q)P\eta_{22}H_2}{\beta_q P H_{21}^*+\sigma^2}\right)} \left|\frac{(1-\beta_q)PH_2(1-\eta_{22})}{\beta_q PH_{21}^*+\sigma^2} \right.\\
    &\quad \left.+ (1-\beta_q)P\eta_{22}H_2 \frac{2\beta_q P H_2 \sqrt{2 - 2\eta_{11}}}{(\beta_q P H_{21}^*+\sigma^2)(\beta_q P H_{21}+\sigma^2)}\right|\\
    &\quad +\frac{1}{\ln 2} \frac{\beta_q P H_1(1-\eta_{11})}{\sigma^2+\beta_q P\eta_{11}H_1}
\end{aligned}
\end{equation}
By simple scaling, we have the equation in Lemma~\ref{le:4}:
\begin{equation}
    \begin{aligned}
    & R_{\mathrm N}(H;\mathbf{h}/\|\mathbf{h}\|;\beta_q)
    -R_{\mathrm N}(H;\mathbf{w};\beta_q) \\
    &\le \frac{1}{\ln 2} \left(\frac{(1-\beta_q)PH_2(1-\eta_{22})}{\sigma^2} \right.\\
    &\quad\left. +\frac{\beta_q P H_1(1-\eta_{11})}{\sigma^2} +  \frac{2(1-\beta_q)P^2 \beta_q \eta_{22}H_2^2 \sqrt{2 - 2\eta_{11}}}{\sigma^4}\right).
\end{aligned}
\end{equation}
The first term approaches $0$ when $\eta_{22}\to 1$, and the second and third terms decrease exponentially to $0$ when $\eta_{11}\to 1$.

\section{Proof of Lemma~\ref{lemma:S1}}\label{app:S1}
Since $\|\mathbf{w}_1\|=1$, we know $\left|\mathbf{w}_1+\frac{\mathbf{h}_1}{\|\mathbf{h}_1\|}\right|\le 2$. By factoring out $\|\mathbf{h}_2\|$ and applying Cauchy-Schwarz inequality, we have
\begin{equation}
    \begin{aligned}
        &\left|H_{21}^*-H_{21} \right|\\
        & = \left| \left|\mathbf{h}_2^H\frac{\mathbf{h}_1}{\|\mathbf{h}_1\|}\right|^2-\left|\mathbf{h}_2^H\mathbf{w}_1\right|^2 \right|\\
        & = \left|(\mathbf{h}_2^H\frac{\mathbf{h}_1}{\|\mathbf{h}_1\|}+\mathbf{h}_2^H\mathbf{w}_1)(\mathbf{h}_2^H\frac{\mathbf{h}_1}{\|\mathbf{h}_1\|}-\mathbf{h}_2^H\mathbf{w}_1)\right|\\
        &\le \|\mathbf{h}_2\|^2 \left|\mathbf{w}_1+\frac{\mathbf{h}_1}{\|\mathbf{h}_1\|}\right|\left|\mathbf{w}_1-\frac{\mathbf{h}_1}{\|\mathbf{h}_1\|}\right|\\
        & \le 2 \|\mathbf{h}_2\|^2 \left|\mathbf{w}_1-\frac{\mathbf{h}_1}{\|\mathbf{h}_1\|}\right|
    \end{aligned}
    \label{eq:s1}
\end{equation}
By expanding the squared Euclidean norm, we obtain
\begin{equation}
    \begin{aligned}
        \left|\mathbf{w}_1-\frac{\mathbf{h}_1}{\|\mathbf{h}_1\|}\right|^2 
        & = \left(\mathbf{w}_1-\frac{\mathbf{h}_1}{\|\mathbf{h}_1\|}\right)^H \left(\mathbf{w}_1-\frac{\mathbf{h}_1}{\|\mathbf{h}_1\|}\right)\\
        & = 2 - 2\mathcal{R}\{\mathbf{w}_1^H \frac{\mathbf{h}_1}{\|\mathbf{h}_1\|}\}\\
        & = 2 - 2\sqrt{\eta_{11}}
    \end{aligned}
\end{equation}
Bringing this back to Eq.~\eqref{eq:s1}, and using the property $0<\eta_{11}\le1$, we have
\begin{equation}
    \begin{aligned}
        \left|H_{21}^*-H_{21}\right| 
        & \le 2 \|\mathbf{h}_2\|^2 \left|\mathbf{w}_1-\frac{\mathbf{h}_1}{\|\mathbf{h}_1\|}\right|\\
        & = 2 \|\mathbf{h}_2\|^2 \sqrt{2 - 2\sqrt{\eta_{11}}}\\
        & \le 2 H_2 \sqrt{2 - 2\eta_{11}}
    \end{aligned}
\end{equation}
This completes the proof of Lemma~\ref{lemma:S1}.

\section{Proof of Lemma~\ref{lemma:5}}\label{app:lemma5}

The optimal power allocation with full CSI has three possibilities $\beta^*= \{\beta_{\min},\beta_0,\beta_{\max}\}$. Here, we present the upper bound of $\Delta\beta$ in each case, express them as functions of $\eta$ and $\delta$, and prove the convergence rate.

\noindent\textbf{Step 1:} when $\beta^*=\beta_{\min}$, we have
\begin{equation}
    \begin{aligned}
        |\Delta \beta| &= |\beta_q -\beta^*|\\
        &=\left|\frac{\epsilon\sigma^2}{P \hat{H}_{11}}-\frac{\epsilon\sigma^2}{PH_1}\right|\\
        &\le \frac{\epsilon\sigma^2}{P} \max\left\{\left|\frac{1}{\eta_{11}H_1}-\frac{1}{H}_1\right|,\left|\frac{1}{\eta_{11}H_1+\delta}-\frac{1}{H}_1\right| \right\}\\
        &= \frac{\epsilon\sigma^2}{P}\max\left\{\frac{1-\eta_{11}}{\eta_{11}H_1}, \left|\frac{(1-\eta_{11})H_1-\delta}{(\eta_{11}H_1+\delta)H_1}\right|\right\}\\
    \end{aligned}
\end{equation}

There are two possible cases in the maximization.

\noindent\textit{Case 1:}  When $|\Delta \beta| = \frac{\epsilon\sigma^2}{P}\frac{1-\eta_{11}}{\eta_{11}H_1}$,
\begin{equation}
    \begin{aligned}
        \mathbb{E}\left[\frac{\epsilon\sigma^2}{P}\frac{1-\eta_{11}}{\eta_{11}H_1}\right]=\frac{\epsilon\sigma^2}{P}\mathbb{E}\left[\frac{1}{H_1}\right]\mathbb{E}\left[\frac{1-\eta_{11}}{\eta_{11}}\right]
    \end{aligned}
\end{equation}
Since $H_1$ follows the Gamma distribution, $\mathbb{E}[\frac{1}{H_1}]=\frac{1}{N_t-1}$. We write $\mathbb{E}[\frac{1-\eta_{11}}{\eta_{11}}]$ as sum of two parts, one given $\eta_{11}<t$ and the second part its complement:

\begin{equation}
\begin{aligned}
    \mathbb{E}\left[\frac{1-\eta_{11}}{\eta_{11}}\right]&=\mathbb{E}\left[\left.\frac{1-\eta_{11}}{\eta_{11}}\right|\eta_{11}<t\right]Pr(\eta_{11}<t)\\
    &\quad +\mathbb{E}\left[\left.\frac{1-\eta_{11}}{\eta_{11}}\right|\eta_{11}>t\right]Pr(\eta_{11}>t)
\end{aligned}
\label{equ1minuseta}
\end{equation}
where t is a constant that will be chosen later.

For the first part of \eqref{equ1minuseta}, we have
\begin{equation}
\begin{aligned}
        \mathbb{E}\left[\left.\frac{1-\eta_{11}}{\eta_{11}}\right|\eta_{11}<t\right]Pr(\eta_{11}<t)
        &=\mathbb{E}\left[\frac{1-\eta_{11}}{\eta_{11}}*1_{\eta_{11}<t}\right]\\
        &\le\mathbb{E}\left[\frac{1}{\eta_{11}}*1_{\eta_{11}<t}\right]
\end{aligned}
\end{equation}
For $0<x<1$ and $n\ge1$, it is easy to prove that $1-x^n \le n(1-x)$. As a result, $F_{\eta}(\eta)=\left(1-(1-\eta)^{N_t-1}\right)^{2^{B'}}\le \left((N_t-1)\eta\right)^{2^{B'}}$. We have 
$ f_{\eta}(\eta)=F'_{\eta}(\eta) \le 2^{B'}(N_t-1)^{2^{B'}}\eta^{2^{B'}-1}$.
As a result,
\begin{equation}
    \begin{aligned}
        \mathbb{E}\left[\frac{1}{\eta_{11}}*1_{\eta_{11}<t}\right] &= \int_0^t\frac{1}{x}f_{\eta}(x)dx\\
        & \le \int_0^t\frac{1}{x} 2^{B'}(N_t-1)^{2^{B'}}x^{2^{B'}-1} dx\\
        & = \frac{2^{B'}(N_t-1)^{2^{B'}}}{2^{B'}-1}t^{2^{B'}-1}
    \end{aligned}
\end{equation}
Recall that $t$ is a positive constant introduced for the proof. By choosing $t=1/[2(N_t-1)]$, we have
\begin{equation}
    \begin{aligned}
        \mathbb{E}\left[\frac{1}{\eta_{11}}*1_{\eta_{11}<t}\right]
        & \le \frac{2^{B'}(N_t-1)}{2^{B'}-1} \left(\frac{1}{2}\right)^{2^{B'}-1}
    \end{aligned}
\end{equation}
such that the term approaches 0 faster than exponentially. 

For the second part  of \eqref{equ1minuseta}, with the same choice of $t$, we have 
\begin{equation}
\begin{aligned}
        \mathbb{E}\Bigl[\left.\frac{1-\eta_{11}}{\eta_{11}}\right|&\eta_{11}>t\Bigr]Pr(\eta_{11}>t)\\
        &\le \mathbb{E}[\frac{1-\eta_{11}}{t}|\eta_{11}>t]Pr(\eta_{11}>t)\\
        & \le 2(N_t-1)\mathbb{E}[1-\eta_{11}]\\
        & \approx (N_t-1) 2^{-\frac{B'}{N_t-1}+1}.
\end{aligned}
\end{equation}
So, when $|\Delta \beta| = \frac{\epsilon\sigma^2}{P}\frac{1-\eta_{11}}{\eta_{11}H_1}$, it approaches 0 at least exponentially.

\noindent\textit{Case 2:}  When $|\Delta \beta| = \frac{\epsilon\sigma^2}{P}|\frac{(1-\eta_{11})H_1-\delta}{(\eta_{11}H_1+\delta)H_1}|$, 

If $\frac{(1-\eta_{11})H_1-\delta}{(\eta_{11}H_1+\delta)H_1}>0$, we know that $\frac{1-\eta_{11}}{\eta_{11}H_1}> \frac{(1-\eta_{11})H_1-\delta}{(\eta_{11}H_1+\delta)H_1}$ always holds, and maximization will not choose this term. As a result, for Case 2, we only need to consider the case where $\frac{(1-\eta_{11})H_1-\delta}{(\eta_{11}H_1+\delta)H_1}<0$:
\begin{equation}
    \begin{aligned}
        \left|\frac{(1-\eta_{11})H_1-\delta}{(\eta_{11}H_1+\delta)H_1}\right|<\left|\frac{\delta}{(\eta_{11}H_1+\delta)H_1}\right|<\left|\frac{\delta}{H_1^2}\right|
    \end{aligned}
\end{equation}
$H_1>\frac{\epsilon\sigma^2}{P}$ is a prerequisite of the NOMA's feasible region. So, we have
\begin{equation}
    \begin{aligned}
         \mathbb{E}\left|\frac{\delta}{H_1^2}\right| \le \frac{P^2}{\epsilon^2\sigma^4} \mathbb{\delta} \propto 2^{-B}
    \end{aligned}
\end{equation}
As a result, $|\Delta \beta_{min}|$ approaches 0 at least exponentially as $B$ and $B'$ increase.

\noindent\textbf{Step 2:} when $\beta^*=\beta_{\max}$

From definition, $\beta_{\max} = \min\{\frac{PH_{22}-\epsilon \sigma^2}{PH_{22}+\epsilon PH_{21}}, \frac{PH_{12}-\epsilon \sigma^2}{PH_{12}+\epsilon PH_{11}}\}$. We consider the two possible minimum cases separately. For each case, we express an upper bound of $\Delta \beta$ as a function of $\eta$ and $\delta$, and derive the convergence rate.

\noindent\textit{Case 1: $\beta_{\max} = \frac{PH_{22}-\epsilon \sigma^2}{PH_{22}+\epsilon PH_{21}}$}

\begin{equation}
\begin{aligned}
    |\Delta \beta |
    & = \left|\frac{P\hat{H}_{22}-\epsilon \sigma^2}{P\hat{H}_{22}+\epsilon P\hat{H}_{21}} - \frac{PH_{2}-\epsilon \sigma^2}{PH_{2}+\epsilon PH^*_{21}}\right|\\
    & = \left|\frac{\epsilon P \left(PH_{21}^*+\sigma^2\right)\left(H_2-\hat{H}_{22}\right)}{\left(P\hat{H}_{22}+\epsilon P\hat{H}_{21}\right)\left(PH_2+\epsilon P H_{21}^*\right)}\right.\\
    & \qquad+\left.\frac{\epsilon P \left(\epsilon \sigma^2 - P H_2\right)\left(H_{21}^*-\hat{H}_{21}\right)}{\left(P\hat{H}_{22}+\epsilon P\hat{H}_{21}\right)\left(PH_2+\epsilon P H_{21}^*\right)}\right|
\label{eq:28}
\end{aligned}
\end{equation}

\begin{lemma}
    $|H^*_{21}-\hat{H}_{21}|<2H_2(\sqrt{2-2\eta_{11}}+\sqrt{2-2\eta_{22}})+\delta+(1-\eta_{22})H_2$
    \label{lemmaS:2}
\end{lemma}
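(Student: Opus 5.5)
The plan is to insert intermediate quantities between $H_{21}^*=\bigl|\mathbf{h}_2^H\frac{\mathbf{h}_1}{\|\mathbf{h}_1\|}\bigr|^2$ and $\hat H_{21}=\hat H_{22}|\mathbf{w}_2^H\mathbf{w}_1|^2$, and then apply the triangle inequality. Each link should account for exactly one source of error: the direction error of $\mathbf{w}_1$, the direction error of $\mathbf{w}_2$, the gain loss $\eta_{22}$, and the amplitude quantization $\delta$. Write $\mathbf{u}_2=\mathbf{h}_2/\|\mathbf{h}_2\|$ and $c=|\mathbf{w}_2^H\mathbf{w}_1|\le 1$. I would use the chain
\[
H_{21}^* \;\to\; H_{21}=H_2|\mathbf{u}_2^H\mathbf{w}_1|^2 \;\to\; H_2 c^2 \;\to\; \eta_{22}H_2 c^2=H_{22}c^2 \;\to\; \hat H_{22}c^2=\hat H_{21}.
\]

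\textbf{Step 1.} The first link is exactly Lemma~\ref{lemma:S1}, which gives $|H_{21}^*-H_{21}|\le 2H_2\sqrt{2-2\eta_{11}}$.

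\textbf{Step 2.} For the second link I would repeat the argument of Lemma~\ref{lemma:S1} with the roles exchanged. The weak user's direction $\mathbf{u}_2$ is now compared with $\mathbf{w}_2$, with $\mathbf{w}_1$ held fixed. First choose a phase $\phi$ so that $\tilde{\mathbf{w}}_2=e^{j\phi}\mathbf{w}_2$ satisfies $\tilde{\mathbf{w}}_2^H\mathbf{u}_2=\sqrt{\eta_{22}}\ge 0$. This phase change does not alter $c$, because $|\tilde{\mathbf{w}}_2^H\mathbf{w}_1|=c$. Factoring the difference of squared moduli and using Cauchy--Schwarz then gives
\[
\bigl||\mathbf{u}_2^H\mathbf{w}_1|^2-|\tilde{\mathbf{w}}_2^H\mathbf{w}_1|^2\bigr|\le \|\mathbf{u}_2+\tilde{\mathbf{w}}_2\|\,\|\mathbf{u}_2-\tilde{\mathbf{w}}_2\|\le 2\sqrt{2-2\sqrt{\eta_{22}}}\le 2\sqrt{2-2\eta_{22}}.
\]
Multiplying by $H_2$ bounds the second link by $2H_2\sqrt{2-2\eta_{22}}$.

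\textbf{Step 3.} The third link satisfies $|H_2c^2-\eta_{22}H_2c^2|=(1-\eta_{22})H_2c^2\le(1-\eta_{22})H_2$.

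\textbf{Step 4.} For the fourth link, in Case (i) the quantizer obeys $0\le H_{22}-\hat H_{22}<\delta$. Together with $c^2\le1$ this bounds the link by $\delta$.

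Adding the four bounds gives the claimed $2H_2(\sqrt{2-2\eta_{11}}+\sqrt{2-2\eta_{22}})+\delta+(1-\eta_{22})H_2$. The inequality is strict because the $\delta$ step is strict.

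The only delicate point is the phase normalization in Step 2. Without it, $\|\mathbf{u}_2-\mathbf{w}_2\|$ need not equal $\sqrt{2-2\sqrt{\eta_{22}}}$, since $\eta_{22}$ fixes only the modulus of the inner product. The argument needs to state explicitly that a global phase on $\mathbf{w}_2$ leaves both $c$ and $\eta_{22}$ unchanged; after that the remaining steps are routine. The same step is implicitly needed in Lemma~\ref{lemma:S1} for $\mathbf{w}_1$, so I would remark on it once.
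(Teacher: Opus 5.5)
Your proof is correct and follows essentially the same route as the paper. The paper also applies the triangle inequality through $H_{21}$ and $H_2|\mathbf{w}_2^H\mathbf{w}_1|^2$, invokes Lemma~\ref{lemma:S1} for the first link, repeats that argument with $\mathbf{w}_2$ for the second, and bounds the remainder by $|H_2-\hat H_{22}|\le(1-\eta_{22})H_2+\delta$ in one step where you use two. Your explicit phase normalization fills a gap the paper leaves implicit, both here and in Lemma~\ref{lemma:S1}; note only that for complex scalars the ``factoring'' step is really the identity $|x|^2-|y|^2=\mathrm{Re}\bigl[(x+y)\overline{(x-y)}\bigr]$, which is what yields your inequality.
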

\noindent\textit{Proof.} 
\begin{equation}
\begin{aligned}
    &|H^*_{21}-\hat{H}_{21}| \\
    &= \left|H^*_{21}-H_{21}+H_{21}-\hat{H}_{21}\right|\\
    & \le \left|H^*_{21}-H_{21}\right|+\left|H_{21}-\hat{H}_{21}\right|\\
    & = \left|\left|\mathbf{h}_2^H\frac{\mathbf{h}_1}{\|\mathbf{h}_1\|}\right|^2-\left|\mathbf{h}_2^H\mathbf{w}_1\right|^2\right| +\left|\left|\mathbf{h}_2^H\mathbf{w}_1\right|^2-\hat{H}_{22}\left|\mathbf{w}_2^H\mathbf{w}_1\right|^2\right|\\
    & = \left|\left|\mathbf{h}_2^H\frac{\mathbf{h}_1}{\|\mathbf{h}_1\|}\right|^2-\left|\mathbf{h}_2^H\mathbf{w}_1\right|^2\right|\\
    &\quad +\left|\left|\mathbf{h}_2^H\mathbf{w}_1\right|^2-H_2\left|\mathbf{w}_2^H\mathbf{w}_1\right|^2\right|+\delta+(1-\eta_{22})H_2
\label{eq:29}
\end{aligned}
\end{equation}
Following a similar process used in the proof of  Lemma~\ref{lemma:S1}, we have
\begin{equation}
    \left|\left|\mathbf{h}_2^H\mathbf{w}_1\right|^2-H_2\left|\mathbf{w}_2^H\mathbf{w}_1\right|^2\right| < 2 H_2 \sqrt{2 - 2\eta_{22}}
\label{eq:30}
\end{equation}
Applying Eq.~\eqref{eq:30} to Eq.~\eqref{eq:29}, we complete the proof of Lemma~\ref{lemmaS:2}.

Applying Lemma~\ref{lemmaS:2} to Eq.~\eqref{eq:28}, we have
\begin{equation}
\begin{aligned}
    |\Delta \beta |
    & \le \frac{\epsilon P \left|PH_{21}^*+\sigma^2\right|\left((1-\eta_{22})H_2+\delta\right)}{P\eta_{22}H_2(PH_2+\epsilon P H_{21}^*)}\\
    & +\frac{\epsilon P \left|\epsilon \sigma^2 - P H_2\right|\left(2 H_2 \left(\sqrt{2-2\eta_{11}}+\sqrt{2-2\eta_{22}}\right)\right)}{P\eta_{22}H_2\left(PH_2+\epsilon P H_{21}^*\right)}\\
    &+\frac{\epsilon P \left|\epsilon \sigma^2 - P H_2\right|(\delta+(1-\eta_{22})H_2)}{P\eta_{22}H_2\left(PH_2+\epsilon P H_{21}^*\right)}
\end{aligned}
\end{equation}
We can rearrange the above terms as follows.
\begin{equation}
\begin{aligned}
    |\Delta \beta |
    & \le \frac{\left[\epsilon  \left|PH_{21}^*+\sigma^2\right|+\epsilon\left|\epsilon\sigma^2-PH_2\right|\right](1-\eta_{22})}{\eta_{22}\left(PH_2+\epsilon P H_{21}^*\right)}\\
    & +\frac{\left(\epsilon \left|PH_{21}^*+\sigma^2\right|+ \epsilon \left|\epsilon \sigma^2 - P H_2\right| \right)\delta}{\eta_{22}H_2\left(PH_2+\epsilon P H_{21}^*\right)}\\
    &+\frac{2 \epsilon \left|\epsilon \sigma^2 - P H_2\right|\left( H_2 \left(\sqrt{2-2\eta_{11}}+\sqrt{2-2\eta_{22}}\right)\right)}{\eta_{22}H_2\left(PH_2+\epsilon P H_{21}^*\right)}
\end{aligned}
\label{betamax}
\end{equation}
The upper bound above has three terms. In what follows, we prove that each term approaches 0 exponentially
as the rate increases. 
For the first term of Eq.~\eqref{betamax}, we have
\begin{equation}
\begin{aligned}
    &\frac{\left[\epsilon  \left|PH_{21}^*+\sigma^2\right|+\epsilon\left|\epsilon\sigma^2-PH_2\right|\right]\left(1-\eta_{22}\right)}{\eta_{22}\left(PH_2+\epsilon P H_{21}^*\right)} \\
    & \le \frac{\epsilon\left(2PH_2+(1+\epsilon)\sigma^2\right)\left(1-\eta_{22}\right)}{\eta_{22}PH_2}\\
    & \le \left(\frac{2\epsilon}{\eta_{22}}+\frac{\epsilon(1+\epsilon)\sigma^2}{\eta_{22}PH_2}\right)(1-\eta_{22})
\end{aligned}
\label{eq:34}
\end{equation}
From the feasibility condition, we have $\beta_{2,max} = \frac{PH_2-\epsilon\sigma^2}{PH_2+\epsilon P H^*_{21}}>0$ and $H_2 > \frac{\epsilon\sigma^2}{P}$, as a result:
\begin{equation}
    \frac{1}{H_2}<\frac{P}{\epsilon\sigma^2}
\end{equation}
Applying this to Eq.~\eqref{eq:34}, we have
\begin{equation}
\begin{aligned}
    &\frac{\left[\epsilon \left|PH_{21}^*+\sigma^2\right|+\epsilon\left|\epsilon\sigma^2-PH_2\right|\right](1-\eta_{22})}{\eta_{22}\left(PH_2+\epsilon P H_{21}^*\right)}  \\
    & \le \left(\frac{2\epsilon}{\eta_{22}}+\frac{\epsilon(1+\epsilon)\sigma^2}{\eta_{22}P} \frac{P}{\epsilon\sigma^2}\right)(1-\eta_{22})\\
    & =(1+3\epsilon)\frac{1-\eta_{22}}{\eta_{22}}
\end{aligned}
\end{equation}
In Eq.~\eqref{equ1minuseta} and the following equations, we have shown that $\mathbb{E}\left[\frac{1-\eta_{11}}{\eta_{11}}\right]$ approaches 0 exponentially when $B'$ increases. Following a similar procedure, $\mathbb{E}[\frac{1-\eta_{22}}{\eta_{22}}]$ approaches 0 exponentially. Therefore,
the first term approaches 0 exponentially as $B'$ increases.

For the second term of Eq.~\eqref{betamax}, we have
\begin{equation}
    \begin{aligned}
        &\frac{\left(\epsilon \left|PH_{21}^*+\sigma^2\right|+\epsilon \left|\epsilon \sigma^2 - P H_2\right| \right)\delta}{\eta_{22}H_2(PH_2+\epsilon P H_{21}^*)} \\
        &\le \frac{\epsilon \left(PH_{21}^*+\sigma^2+PH_2+\epsilon\sigma^2\right)\delta}{\eta_{22}P H_2^2}\\
        & \le \frac{\epsilon \left(2 P H_2+\sigma^2+\epsilon\sigma^2\right)\delta}{\eta_{22}P H_2^2}\\
        & = \left[\frac{2\epsilon}{\eta_{22}H_2}+\frac{\epsilon(1+\epsilon)\sigma^2}{P\eta_{22}H_2^2}\right]\delta \\
        &\le \left[\frac{2\epsilon}{\eta_{22}\left(\frac{\epsilon\sigma^2}{P}\right)}+\frac{\epsilon(1+\epsilon)\sigma^2}{P\eta_{22}\left(\frac{\epsilon\sigma^2}{P}\right)^2}\right]\delta\\
        & = \frac{P}{\sigma^2}\left(3+\frac{1}{\epsilon}\right)\delta \frac{1}{\eta_{22}}
    \end{aligned}
\end{equation}
Take the expectation on both sides,
\begin{equation}
    \begin{aligned}
        &\mathbb{E}\left[\frac{(\epsilon |PH_{21}^*+\sigma^2|+\epsilon |\epsilon \sigma^2 - P H_2| )\delta}{\eta_{22}H_2(PH_2+\epsilon P H_{21}^*)} \right] \\
        & \le \mathbb{E}\left[\frac{P}{\sigma^2}\left(3+\frac{1}{\epsilon}\right)\delta \frac{1}{\eta_{22}}\right]\\
        & = \frac{P}{\sigma^2}\left(3+\frac{1}{\epsilon}\right)\delta \mathbb{E}\left[\frac{1-\eta_{22}}{\eta_{22}}+1\right]
    \end{aligned}
\end{equation}
As $B'$ increases, $\mathbb{E}[\frac{1-\eta_{22}}{\eta_{22}}+1]$ is $O(1)$ and the whole term becomes $O(2^{-B})$, i.e., it decreases to 0 exponentially as $B$ increases.

For the third term of Eq.~\eqref{betamax}, we have
\begin{equation}
    \begin{aligned}
        &\frac{2 \epsilon \left|\epsilon \sigma^2 - P H_2\right| H_2 \left(\sqrt{2-2\eta_{11}}+\sqrt{2-2\eta_{22}}\right)}{\eta_{22}H_2\left(PH_2+\epsilon P H_{21}^*\right)} \\
        & \le \frac{2 \epsilon \left(\epsilon \sigma^2 + P H_2\right)\left(\sqrt{2-2\eta_{11}}+\sqrt{2-2\eta_{22}}\right)}{\eta_{22}\left(PH_2+\epsilon P H_{21}^*\right)}\\
        & \le \frac{2 \epsilon \left(\epsilon \sigma^2 + P H_2\right)\left(\sqrt{2-2\eta_{11}}+\sqrt{2-2\eta_{22}}\right)}{\eta_{22}P H_2}\\
        & = \left(\frac{2\epsilon^2 \sigma^2}{\eta_{22}PH_2}+\frac{2\epsilon}{\eta_{22}}\right)\left(\sqrt{2-2\eta_{11}}+\sqrt{2-2\eta_{22}}\right)\\
        & \le \left(\frac{2\epsilon^2 \sigma^2}{\eta_{22}P} \frac{P}{\epsilon \sigma^2}+\frac{2\epsilon}{\eta_{22}}\right)\left(\sqrt{2-2\eta_{11}}+\sqrt{2-2\eta_{22}}\right)\\
        & = 4\epsilon \frac{1}{\eta_{22}}\left(\sqrt{2-2\eta_{11}}+\sqrt{2-2\eta_{22}}\right)
    \end{aligned}
\end{equation}
We already know that $\mathbb{E}[\frac{1}{\eta_{22}}]$ is $O(1)$ as $B'$ increases. Since $\sqrt{2-2\eta_{11}}+\sqrt{2-2\eta_{22}}$ is $O(2^{\frac{B'}{2(N_T-1)}})$, the third term also approaches 0 exponentially.

\noindent\textit{Case 2: $\beta_{\max} = \frac{PH_{12}-\epsilon \sigma^2}{PH_{12}+\epsilon PH_{11}}$} 

The proof for Case 2 is very similar to that of Case 1 and is omitted for brevity.

\noindent\textbf{Step 3:} when $\beta^*=\beta_{0}$

From the definition of a stationary point, $\beta_0$ satisfies 
\begin{equation}
    a\beta^2+b\beta+c=0
\end{equation}
In the full CSI case, we have
\[
a^* = H_1H^*_{21}\left(H_2-H^*_{21}\right)
\]
\[
b^* = 2H_1 \sigma^2\left(H_2-H^*_{21}\right)
\]
\[
c^* = \sigma^4\left(H_2-H_1\right)+\sigma^2\left(H^*_{21}-H_1\right)H_2
\]
In the limited feedback system, we have a similar format, but with different parameter values:
\[
a_q = \hat{H}_{11}\hat{H}_{21}\left(\hat{H}_{22}-\hat{H}_{21}\right)
\]
\[
b_q = 2\hat{H}_{11}\sigma^2\left(\hat{H}_{22}-\hat{H}_{21}\right)
\]
\[
c_q = \sigma^4\left(\hat{H}_{22}-\hat{H}_{11}\right)+\sigma^2\left(\hat{H}_{21}-\hat{H}_{11}\right)\hat{H}_{22}
\]
If we define $f_q(\beta) = a_q \beta^2 +b_q \beta +c_q$ and $f^*(\beta) = a^* \beta^2 +b^* \beta +c^*$, 
\begin{equation}
    f^*(\beta_{0,q})-f^*(\beta_0^*)=f^*(\beta_{0,q})=f^{*'}(\xi)(\beta_{0,q}-\beta_0^*)
\end{equation}
where $\beta_{0,q}$ and $\beta_0^*$ are $\beta_0$ in the limited feedback and the full CSI cases, respectively, and $\xi$ is a value between $\beta_{0,q}$ and $\beta_0^*$,
\begin{equation}
    \begin{aligned}
        \left|\Delta \beta_0\right| & = \left|\frac{f^*(\beta_{0,q})}{f^{*'}(\xi)}\right|\\
        & = \left|\frac{f^*(\beta_{0,q})-f_q(\beta_{0,q})}{f^{*'}(\xi)}\right|\\
        & = \left|\frac{(a_q-a^*)\beta_{0,q}^{2} +(b_q-b^*)\beta_{0,q} +c_q-c^*}{f^{*'}(\xi)}\right|\\
        & = \left|\frac{\Delta a \beta_{0,q}^{2} +\Delta b\beta_{0,q} +\Delta c}{2a^*\xi+b^*}\right|\\
        &\le \left|\frac{\Delta a}{b^*}\right| + \left|\frac{\Delta b}{b^*}\right| + \left|\frac{\Delta c}{b^*}\right|
    \end{aligned}
\end{equation}

Below, we consider the above three terms separately.

\noindent\textit{Part 1:} $\frac{\Delta a}{b^*}$
\begin{equation}
\begin{aligned}
    |\Delta a| &= \left|a_q - a^*\right|\\
    &=\left|\hat{H}_{11}\hat{H}_{21}\left(\hat{H}_{22}-\hat{H}_{21}\right)-H_1H_{21}^*\left(H_2-H_{21}^*\right)\right|\\
    &=\left|\hat{H}_{21}\left(\hat{H}_{22}-\hat{H}_{21}\right)\left(\hat{H}_{11}-H_1\right)\right.\\
    &\quad\left.+H_1\left(\hat{H}_{22}-\hat{H}_{21}\right)\left(\hat{H}_{21}-H_{21}^*\right)\right.\\
    &\quad\left.+H_1 H_{21}^*\left[\left(\hat{H}_{22}-H_2\right)-\left(\hat{H}_{21}-H^*_{21}\right)\right]\right|\\
    &\le H_2^2 \left|\hat{H}_{11}-H_1\right|+H_1\hat{H}_{22}\left|\hat{H}_{21}-H_{21}^*\right|\\
    &\quad+H_1 H_2\left[\left|\hat{H}_{22}-H_2\right|+\left|\hat{H}_{21}-H^*_{21}\right|\right]\\
    & \le \left[(1-\eta_{11})H_1+\delta\right]H_2^2+H_1 H_2\left[(1-\eta_{22})H_2+\delta\right]\\
    & \quad +2 H_1 H_2\left[2H_2\left(\sqrt{2-2\eta_{11}}+\sqrt{2-2\eta_{22}}\right)\right]\\
    & \quad +2 H_1 H_2\left[(1-\eta_{22})H_2+\delta\right]
\end{aligned}  
\end{equation}
Here, we define $\cos\theta = \frac{\left|\mathbf{h}_2^H\mathbf{h}_1\right|}{\|\mathbf{h}_1\| \|\mathbf{h}_2\|}$. Then, $b^*= 2H_1 \sigma^2(H_2-H^*_{21})=2H_1H_2\sigma^2\sin^2\theta$. Since $\mathbf{h}_1$ and $\mathbf{h}_2$ are independent, we have $\sin^2\theta \sim Beta(N_t-1,1)$, and the expectation
\begin{equation}
    \begin{aligned}
        \mathbb{E}\left[\frac{1}{\sin^2\theta}\right] &= \int_0^1 \frac{1}{\sin^2\theta} (N_t-1) \left(\sin^2\theta\right)^{N_t-2} d \sin^2\theta\\
        & = \int_0^1 (N_t-1) \left(\sin^2\theta\right)^{N_t-3} d \sin^2\theta,
    \end{aligned}
\end{equation}
when $N_t>2$, $\mathbb{E}\left[1/\sin^2\theta\right]=(N_t-1)/(N_t-2)$ and when $N_t=2$, the expectation is infinite.
\begin{equation}
    \begin{aligned}
        \mathbb{E}|\frac{\Delta a}{b^*}|
        &\le \mathbb{E}\left[\frac{H_1 H_2^2(1-\eta_{11})+H_2^2\delta +3H_1H_2^2(1-\eta_{22})}{2H_1H_2\sigma^2\sin^2\theta} \right. \\
        &\quad \left. + \frac{3H_1H_2\delta +4H_1 H_2^2\sqrt{2-2\eta_{11}}}{2H_1H_2\sigma^2\sin^2\theta} \right.\\
        &\quad \left. + \frac{4H_1 H_2^2\sqrt{2-2\eta_{22}}}{2H_1H_2\sigma^2\sin^2\theta}\right]\\
        & =\frac{1}{2\sigma^2} \frac{N_t-1}{N_t-2}\left[N_t \mathbb{E}(1-\eta_{11})+\frac{N_t}{N_t-1}\mathbb{E}[\delta]\right.\\
        &\quad\left. +3N_t\mathbb{E}(1-\eta_{22})+3\delta\right.\\
        &\quad\left. +4N_t\mathbb{E}\sqrt{2-2\eta_{11}}+4N_t\mathbb{E}\sqrt{2-2\eta_{22}}\right]\\
        & = O(2^{-B})+O(2^{\frac{-B'}{2(N_t-1)}})
    \end{aligned}
\end{equation}
So, $\mathbb{E}\left|\frac{\Delta a}{b^*}\right|$ approaches 0 exponentially when $N_t>2$.

\noindent\textit{Part 2:} $\frac{|\Delta b|}{b^*}$. For $\Delta b$, we have
\begin{equation}
    \begin{aligned}
        |\Delta b| &=\left|b_q-b^*\right|\\
        & =\left|2\sigma^2\hat{H}_{11}\left(\hat{H}_{22}-\hat{H}_{21}\right) -2\sigma^2 H_{1}\left(H_{2}-H^*_{21}\right)\right|\\
        & = \left|2\sigma^2\left[\hat{H}_{11}\left(\hat{H}_{22}-H_2+H_{21}^*-\hat{H}_{21}\right)\right.\right.\\
        & \quad\left.\left. +\left(\hat{H}_{11}-H_1\right)\left(H_2-H_{21}^*\right)\right]\right|\Bigr]
    \end{aligned}
\end{equation}
Since $\hat{H}_{11} \le H_1$, and by Lemma~\ref{lemmaS:2}, we have
\begin{equation}
    \begin{aligned}
        |\Delta b|
        & \le 2\sigma^2\left[H_1\left(\left|\hat{H}_{22}-H_2\right|+\left|\hat{H}_{21}-H^*_{21}\right|\right)\right.\\
        & \quad\left. +H_2\left|\hat{H}_{11}-H_1\right|\right]\\
        & \le 2\sigma^2\Bigl[2H_1H_2(1-\eta_{22})+2H_1\delta+H_1H_2(1-\eta_{11})\Bigr.\\
        & \quad\Bigl. +H_2\delta+2H_1H_2(\sqrt{2-2\eta_{11}}+\sqrt{2-2\eta_{22}})\Bigr]
    \end{aligned}
\end{equation}

As a result, we have
\begin{equation}
    \begin{aligned}
        \mathbb{E}\frac{|\Delta b|}{b^*}
        &\le
        \mathbb{E}\frac{1}{H_{1}H_{2}\sin^2\theta}\Bigl[2H_1H_2(1-\eta_{22})+2H_1\delta\Bigr.\\
        &\quad\left. +2H_1H_2\left(\sqrt{2-2\eta_{11}}+\sqrt{2-2\eta_{22}}\right)\right.\\
        &\quad\Bigl.+H_1H_2(1-\eta_{11})+H_2\delta\Bigr] \\
        &\le
        \mathbb{E}\left[\frac{1}{\sin^2\theta}\right]\mathbb{E}\Bigl[2(1-\eta_{22})+\frac{2}{H_2}\delta+(1-\eta_{11})\Bigr.\\
        &\quad\Bigl.+\frac{\delta}{H_1}+2\left(\sqrt{2-2\eta_{11}}+\sqrt{2-2\eta_{22}}\right)\Bigr].
    \end{aligned}
\end{equation}
When $N_t>2$, we have $\mathbb{E}\frac{|\Delta b|}{b^*} =O\left(2^{-B}\right)+O\left(2^{-\frac{B'}{2(N_t-1)}}\right)$

\noindent\textit{Part 3:}$\frac{|\Delta c|}{b^*}$. For $\Delta c$, we have
\begin{equation}
    \begin{aligned}
        \Delta c &= c_q - c^*\\
        &= \sigma^4 \left[\left(\hat{H}_{22}-H_2\right)-\left(\hat{H}_{11}-H_{1}\right)\right]\\
        &\quad+\sigma^2\left[\left(\hat{H}_{21}-\hat{H}_{11}\right)\hat{H}_{22}-\left(H_{21}^*-H_1\right)H_2\right]\\
        &= \sigma^4 \left[\left(\hat{H}_{22}-H_2\right)-\left(\hat{H}_{11}-H_1\right)\right]\\
        &\quad+\sigma^2\left[\left(\hat{H}_{21}-\hat{H}_{11}\right)\left(\hat{H}_{22}-H_2\right)\right]\\
        &\quad+\sigma^2H_2\left[\left(\hat{H}_{21}-H_{21}^*\right)-\left(\hat{H}_{11}-H_{1}\right)\right]
    \end{aligned}
\end{equation}

which results in 
\begin{equation}
    \begin{aligned}
        |\Delta c| &\le \sigma^4 \left[\left|\hat{H}_{22}-H_2\right|+\left|\hat{H}_{11}-H_{1}\right|\right]\\
        &\quad+\sigma^2\left[\left|\hat{H}_{11}-\hat{H}_{21}\right|\left|\hat{H}_{22}-H_2\right|\right]\\
        &\quad+\sigma^2H_2\left(\left|\hat{H}_{11}-H_1\right|+\left|\hat{H}_{21}-H_{21}^*\right|\right)\\
        &\le \sigma^4 \left[\left|\hat{H}_{22}-H_2\right|+\left|\hat{H}_{11}-H_{1}\right|\right]\\
        &\quad+\sigma^2\left[\left(H_1+H_2\right)\left|\hat{H}_{22}-H_2\right|\right]\\
        &\quad+\sigma^2H_2\left(\left|\hat{H}_{11}-H_1\right|+\left|\hat{H}_{21}-H_{21}^*\right|\right)\\
        & \le\sigma^2\left(H_1+2 H_2+\sigma^2\right)\left[(1-\eta_{22})H_2+\delta\right]\\
        &\quad +\sigma^2\left(\sigma^2+H_2\right)\left[(1-\eta_{11})H_1+\delta\right]\\
        &\quad +2\sigma^2H_2^2 \left(\sqrt{2-2\eta_{11}}+\sqrt{2-2\eta_{22}}\right)
    \end{aligned}
\end{equation}

Dividing by $b^*$ and taking the expectation, when $N_t>2$, we get
\begin{equation}
    \begin{aligned}
        \mathbb{E}\frac{|\Delta c|}{b^*}
        &\le
        \mathbb{E}\Bigl[\frac{\left(H_1+2 H_2+\sigma^2\right)\left[(1-\eta_{22})H_2+\delta\right]}{2H_{1}H_{2}\sin^2\theta}\Bigr.\\
        &\Bigl.\quad + \frac{(H_2+\sigma^2)\left[(1-\eta_{11})H_1+\delta\right]}{2H_{1}H_{2}\sin^2\theta}\Bigr.\\
        &\quad\Bigl. + \frac{2 H_2^2\left(\sqrt{2-2\eta_{11}}+\sqrt{2-2\eta_{22}}\right)}{2H_{1}H_{2}\sin^2\theta}\Bigr]\\
        & \le \mathbb{E}\left\{ \left(\frac{2}{H_1}+\frac{1}{H_2}+\frac{\sigma^2}{H_1H_2}\right)[(1-\eta_{22})H_2+\delta]\right.\\
        &\quad \left.+\left(\frac{\sigma^2}{H_1 H_2}+\frac{1}{H_1}\right)\left((1-\eta_{11})H_1+\delta\right)\right.\\
        &\quad \left.+\left[\frac{2H_2}{H_1}(\sqrt{2-2\eta_{11}}+\sqrt{2-2\eta_{22}})\right]\right\}\mathbb{E}\frac{1}{2\sin^2\theta}\\
        &=O(2^{-B})+O(2^{-\frac{B'}{2(N_t-1)}}).
    \end{aligned}
\end{equation}

As a result, we have proven that the $|\Delta\beta|$ decreases to 0 exponentially when $N_t>2$.

\bibliographystyle{IEEEtran}
\bibliography{IEEEabrv,sample}

\begin{thebibliography}{10}
\providecommand{\url}[1]{#1}
\csname url@samestyle\endcsname
\providecommand{\newblock}{\relax}
\providecommand{\bibinfo}[2]{#2}
\providecommand{\BIBentrySTDinterwordspacing}{\spaceskip=0pt\relax}
\providecommand{\BIBentryALTinterwordstretchfactor}{4}
\providecommand{\BIBentryALTinterwordspacing}{\spaceskip=\fontdimen2\font plus
\BIBentryALTinterwordstretchfactor\fontdimen3\font minus \fontdimen4\font\relax}
\providecommand{\BIBforeignlanguage}[2]{{%
\expandafter\ifx\csname l@#1\endcsname\relax
\typeout{** WARNING: IEEEtran.bst: No hyphenation pattern has been}%
\typeout{** loaded for the language `#1'. Using the pattern for}%
\typeout{** the default language instead.}%
\else
\language=\csname l@#1\endcsname
\fi
#2}}
\providecommand{\BIBdecl}{\relax}
\BIBdecl

\bibitem{10720669}
H.~Jafarkhani, H.~Maleki, and M.~Vaezi, ``Modulation and coding for {NOMA and RSMA},'' \emph{Proc. {IEEE}}, vol. 112, no.~9, pp. 1179--1213, 2024.

\bibitem{10786246}
W.~Chen \emph{et~al.}, ``Signal processing and learning for next generation multiple access in {6G},'' \emph{{IEEE} J. Sel. Topics Signal Process.}, vol.~18, no.~7, pp. 1146--1177, 2024.

\bibitem{Norouzi2023Joint}
S.~Norouzi, B.~Champagne, and Y.~Cai, ``Joint optimization framework for user clustering, downlink beamforming, and power allocation in {MIMO} {NOMA} systems,'' \emph{{IEEE} Trans. Commun.}, vol.~71, no.~1, pp. 214--229, Jan. 2023.

\bibitem{8823873}
M.~Vaezi \emph{et~al.}, ``Non-orthogonal multiple access: Common myths and critical questions,'' \emph{IEEE Wireless Commun.}, vol.~26, no.~5, pp. 174--180, 2019.

\bibitem{10729214}
Y.~Liu \emph{et~al.}, ``The road to next-generation multiple access: A 50-year tutorial review,'' \emph{Proc. IEEE}, vol. 112, no.~9, pp. 1100--1148, 2024.

\bibitem{3gpp_ts38214_v16.17}
3GPP, ``{NR; Physical layer procedures for data (Release 16)},'' {3GPP} TS 38.214, v16.17.0, Dec. 2023, available: https://www.3gpp.org.

\bibitem{7434594}
Z.~Ding and H.~V. Poor, ``Design of {Massive-MIMO-NOMA} with limited feedback,'' \emph{{IEEE} Signal Process. Lett.}, vol.~23, no.~5, pp. 629--633, 2016.

\bibitem{8502922}
J.~Cui, Z.~Ding, and P.~Fan, ``Outage probability constrained {MIMO-NOMA} designs under imperfect {CSI},'' \emph{{IEEE} Trans. Wireless Commun.}, vol.~17, no.~12, pp. 8239--8255, 2018.

\bibitem{9094017}
Y.~Yapıcı, I.~Guvenc, and H.~Dai, ``Low-resolution limited-feedback {NOMA} for {mmWave} communications,'' \emph{{IEEE} Trans. Wireless Commun.}, vol.~19, no.~8, pp. 5433--5446, 2020.

\bibitem{8935164}
M.~Morales-Céspedes, O.~A. Dobre, and A.~García-Armada, ``Semi-blind interference aligned {NOMA} for downlink {MU-MISO} systems,'' \emph{{IEEE} Trans. Commun.}, vol.~68, no.~3, pp. 1852--1865, 2020.

\bibitem{9281361}
H.~E. Hassani, A.~Savard, and E.~V. Belmega, ``Adaptive {NOMA} in time-varying wireless networks with no {CSIT/CDIT} relying on a 1-bit feedback,'' \emph{IEEE Wireless Commun. Lett.}, vol.~10, no.~4, pp. 750--754, 2021.

\bibitem{7968348}
X.~Liu and H.~Jafarkhani, ``Downlink non-orthogonal multiple access with limited feedback,'' \emph{{IEEE} Trans. Wireless Commun.}, vol.~16, no.~9, pp. 6151--6164, 2017.

\bibitem{9103094}
X.~Zou, M.~Ganji, and H.~Jafarkhani, ``Downlink asynchronous non-orthogonal multiple access with quantizer optimization,'' \emph{{IEEE} Wireless Commun. Lett.}, vol.~9, no.~10, pp. 1606--1610, 2020.

\bibitem{10384715}
M.~A. Almasi and H.~Jafarkhani, ``Rate loss analysis of reconfigurable intelligent surface-aided {NOMA} with limited feedback,'' \emph{IEEE Open J. Commun. Soc.}, vol.~5, pp. 856--871, 2024.

\bibitem{7972957}
Q.~Yang \emph{et~al.}, ``{NOMA} in downlink {SDMA} with limited feedback: Performance analysis and optimization,'' \emph{{IEEE} J. Sel. Areas Commun.}, vol.~35, no.~10, pp. 2281--2294, 2017.

\bibitem{8907421}
B.~Clerckx \emph{et~al.}, ``Rate-splitting unifying {SDMA}, {OMA}, {NOMA}, and multicasting in {MISO} broadcast channel: A simple two-user rate analysis,'' \emph{{IEEE} Wireless Commun. Lett.}, vol.~9, no.~3, pp. 349--353, 2020.

\bibitem{4100151}
C.~K. Au-yeung and D.~J. Love, ``On the performance of random vector quantization limited feedback beamforming in a {MISO} system,'' \emph{IEEE Trans. Wireless Commun.}, vol.~6, no.~2, pp. 458--462, 2007.

\bibitem{1715541}
N.~Jindal, ``{MIMO} broadcast channels with finite-rate feedback,'' \emph{IEEE Trans. Inf. Theory}, vol.~52, no.~11, pp. 5045--5060, 2006.

\end{thebibliography}

\end{document}